\def\C{{\mathbb C}}
\def\R{{\mathbb R}}
\def\N{{\mathbb N}}
\def\e{\varepsilon}
\def\eps{\varepsilon}
\def\le{\leqslant}
\def\ge{\geqslant}
\def\Eq#1#2{\mathop{\sim}\limits_{#1\rightarrow#2}}
\theoremstyle{plain}
\newtheorem{theorem}{Theorem}[section]
\newtheorem{lemma}[theorem]{Lemma}
\newtheorem{corollary}[theorem]{Corollary}
\newtheorem{proposition}[theorem]{Proposition}
\newtheorem{hyp}{Assumption}
\theoremstyle{definition}
\newtheorem{remark}[theorem]{Remark}
\newtheorem*{remark*}{Remark}
\numberwithin{equation}{section}
\begin{document}

\title[Weakly nonlinear time-adiabatic theory]
{Weakly nonlinear time-adiabatic theory}
\author[C. Sparber]{Christof Sparber}

\address[C.~Sparber]
{Department of Mathematics, Statistics, and Computer Science, M/C 249, University of Illinois at Chicago, 851 S. Morgan Street, Chicago, IL 60607, USA}
\email{sparber@math.uic.edu}

\begin{abstract}
We revisit the time-adiabatic theorem of quantum mechanics and show that it can be extended 
to weakly nonlinear situations, that is to nonlinear Schr\"odinger equations in which either the 
nonlinear coupling constant or, equivalently, the solution is asymptotically small. To this end, a 
notion of criticality is introduced at which the linear bound states stay adiabatically stable, but 
nonlinear effects start to show up at leading order in the form 
of a slowly varying nonlinear phase modulation. In addition, we prove that in the same regime 
a class of nonlinear bound states also stays adiabatically stable, at least in terms of spectral projections.
\end{abstract}

\date{\today}

\subjclass[2010]{35Q55, 81Q05}
\keywords{adiabatic approximation, nonlinear Schr\"odinger equation, nonlinear phase modulation, nonlinear bound states}

\thanks{This publication is based on work supported by the NSF through grant no. DMS-1161580}
\maketitle

\section{Introduction}
\label{sec:intro}

The time-adiabatic theorem of quantum mechanics is concerned with systems governed by a {\it slowly varying} time-dependent 
(self-adjoint) Hamiltonian operator $H=H(\eps \tau)$, where $0<\eps\ll1$ is a small adiabatic parameter, controlling the time-scales 
on which $H$ varies. The associated Cauchy problem, governing the time-evolution of the quantum mechanical 
wave function $\Psi=\Psi(\tau, x)$, with $x\in \R^d$, reads
\begin{equation}\label{eq:Psi}
i  \partial_ \tau \Psi = H(\eps \tau) \Psi, \quad \Psi _{\mid \tau=\tau_0} = \Psi_{\rm in}(x).
\end{equation}
In the following, it will be more convenient to rewrite the system 
using the (slow) {\it macroscopic time variable} $t=\eps \tau$. In this case, \eqref{eq:Psi} becomes a singularly perturbed problem of the form
\begin{equation}\label{eq:linSch}
i \eps  \partial_t \Psi^\e = H( t) \Psi^\e, \quad \Psi^\e _{\mid t=t_0} = \Psi^\e_{\rm in}(x),
\end{equation}
where $\Psi^\e(t,x)\equiv \Psi(t/\eps ,x)$. A typical example for the time-dependent Hamiltonian $H(t)$, 
and the one we will be concerned with, is given by
\begin{equation}\label{eq:ham}
H(t):= -\frac{1}{2} \Delta + V(t,x), 
\end{equation}
where $V(t,x)$ describes some time-dependent (real-valued) potential. 

It is well-known that in the case where $V=V(x)$ is {\it time-independent}, 
the spectral theorem of self-adjoint operators allows for a precise description of the time-evolution associated to \eqref{eq:linSch}. 
In particular, it implies that if the initial data $\Psi^\e_{\rm in}$ is concentrated in a given spectral subspace of $H$, then it will remain so for all times. 
However, as soon as $H=H(t)$, the spectral subspaces (in general) start to mix during the 
time-evolution, and thus we do not have any precise information on the solution $\Psi(t,\cdot)$.

However, one might hope that for small $0<\e\ll 1$ there is a remedy to the situation. To this end, let us assume that the spectral
subspaces of $H(t)$ vary smoothly in time for $t\in [0,T]$, and that the initial wave function $\Psi^\e_{\rm in}$ is concentrated in 
one of these subspaces. Then the classical time-adiabatic theorem of quantum mechanics states that, for sufficiently small $\e\ll 1$, 
the solution $\Psi^\e(t, \cdot)$ approximately (i.e. up to a certain error which vanishes as $\e\to 0$) remains  
within the same subspace, provided the latter stays isolated from 
the rest of the spectrum of $H(t)$ for all $t\in [0,T]$, see below. In this situation, the spectral subspace is said to be {\it adiabatically stable} under the time-evolution. 
Note that in the unscaled variable $\tau $ this result corresponds to an approximation on time-scales of order $\tau\sim O(1/\eps)$.
The first adiabatic result for quantum systems appeared as early as 1928, cf. \cite{BoFo}. Since then, many mathematical extensions and developments have taken place, 
see, e.g., \cite{AvEl, Be1, Jo, Ka, Ne1, Ne2, Sc}, and the references therein. For a general introduction to this subject we refer to \cite{Te}.

A possible way of introducing the slow parameter $\e$ is to think about a quantum mechanical experiment in which the experimentalist is allowed to slowly tune 
the external potential $V=V(\eps \tau, x)$. With this in mind, it is worth noting that modern quantum mechanical experiments are often performed on ultra-cold quantum gases 
in the state of their {\it Bose-Einstein condensation} \cite{PiSt}. Indeed, ultra-cold quantum gases offer a superb level of control, unprecedented in several respects, which has 
triggered a vast amount of scientific activity, both theoretical and experimental. 
It is well-known that within a mean-field approximation the (macroscopic) wave function of the condensate is accurately described by a {\it nonlinear} Schr\"odinger 
(or, Gross-Pitaevskii) equation, cf. \cite{PiSt} for a general discussion, and \cite{ESY, Pi} and the references therein for a rigorous mathematical justification. 
It therefore seems a natural question to ask, whether one can extend the 
results of time-adiabatic perturbation theory to the case of nonlinear Schr\"odinger equations (NLS). This work is a first, modest attempt in this direction, although one should mention 
that there exist some non-rigorous works in the physics literature, cf. \cite{Yu}. 
Moreover, one should distinguish our time-adiabatic setting from the one in \cite{Sa}, which studies solitary wave solutions to nonlinear Schr\"odinger equations in 
a {\it space-adiabatic} situation, i.e., with a potential of the form $V=V(t,\eps x)$. In addition, we mention \cite{CF, CMS} both of which include rigorous results for related 
nonlinear adiabatic situations.

To be more concrete, we shall study the Cauchy problem corresponding to the following class of NLS:
\begin{equation}\label{eq:NLS}
i \eps \partial_t \Psi^\e = -\frac{1}{2} \Delta \Psi^\e + V(t,x) \Psi^\e + \lambda |\Psi^\e|^{2\sigma} \Psi^\e, \quad \Psi^\e_{\mid t=t_0}  = \Psi^\e_{\rm in}(x),
\end{equation}
where $\sigma \in \N$, and where $\lambda \in \R$ denotes a nonlinear coupling constant, describing either focusing or defocusing behavior, cf. \cite{SuSu} for a broad discussion of these terms. 
The cubic case $\sigma =1$ corresponds to the classical Gross-Pitaevskii equation. 
Clearly, an extension of the time-adiabatic theorem to such nonlinear models is not straightforward, in particular due to the lack of a 
spectral theory for general nonlinear operators. The basic idea in the present paper is to work in an asymptotic regime for which the nonlinearity can be considered as a small  
perturbation of the associated linear problem. A possible way to do so is to restrict ourselves to asymptotically small solutions of the form 
\begin{equation}\label{Psi}
\Psi^\e(t,x) = \e^{1/(2\sigma)} \psi^\e(t,x),
\end{equation}
where, as $\e \to 0$, we formally regard $\psi^\e\sim  O(1)$, say in $L^2(\R^d)$. Note that the size of the original wave function is then $\Psi^\e \sim O(\e^{1/(2\sigma)})$ 
and hence it becomes asymptotically larger the larger $\sigma\in \N$. Rewriting \eqref{eq:NLS} in terms of the new unknown $\psi^\e$ yields
\begin{equation}\label{eq:nls}
i \eps \partial_t \psi^\e = -\frac{1}{2} \Delta \psi^\e + V(t,x) \psi^\e + \lambda \e |\psi^\e|^{2\sigma} \psi^\e, \quad \psi^\e_{\mid t=t_0}  = \psi^\e_{\rm in}(x),
\end{equation}
with an effective nonlinear coupling constant of size $|\lambda^\e| = |\lambda| \e \ll 1$. The Cauchy problem \eqref{eq:nls} can thus be considered {\it weakly nonlinear}.

As we shall see below, a nonlinear coupling constant of order $O(\eps)$ will be critical for our analysis, since it corresponds to the threshold for which nonlinear effects are present at 
the leading order description of $\psi^\e$. In particular, if $\lambda^\e $ were even smaller, the problem would become essentially linearizable (as we will show below, cf., Remark \ref{rem:critical}). 
The main result of this work can now be stated as follows:

\begin{theorem}\label{thm:main}
Let $\sigma \in \N$, $\lambda \in \R$, $I \subseteq \R$ be an open time-interval containing $t_0\in \R$, and $V\in C_{\rm b}^1(I; \mathcal S(\R^d))$. 
Assume that there exists a simple eigenvalue $E(t)\in {\rm spec}(H(t))$ which stays separated from the 
rest of the spectrum by some $\delta>0$, i.e.
\[
\inf_{t\in I} {\rm dist} \big(E(t), {\rm spec}(H(t)) \setminus \{E(t)\}\big) = \delta ,
\]
and choose an associated normalized eigenfunction $\chi \in C^1_{\rm b}(I; H^{s}(\R^d))$ for $s\ge 0$. 
Finally, let $k>\frac{d}{2}$ and assume that at $t=t_0$, the initial data is concentrated in the eigenspace corresponding to $E(t_0)$, such that
\[
\| \psi^\e_{\rm in}  - \chi (t_0,\cdot) -\eps \gamma^\e \|_{H^k(\R^d)} \le C_0 \e^{k+1},
\]
where $\gamma^\e \in H^k(\R^d)$ is a corrector which is constructed
according to \eqref{eq:corrector} and satisfies $\langle \chi(t_0, \cdot), \gamma^\e \rangle_{L^2} =0$.

Then, for any compact time-interval $J\subset I$ containing $t_0$, there exists $\eps_0(J)<1$, and a constant $C>0$, 
such that for any $0<\e\le \e_0(J)$ the unique solution $\psi^\e \in C(J; H^k(\R^d))$ to the nonlinear Schr\"odinger equation \eqref{eq:nls} exists, and, in addition,
\[
\sup_{t\in J} \left \| \psi^\e(t,\cdot)  -  \chi(t,\cdot) e^{-i \varphi^\e(t)} \right\|_{L^2(\R^d)} \le C\e,
\]
where the phase $\varphi^\e(t)\in \R$ is given by
\[
\varphi^\e(t) = \frac{1}{\eps} \int_{t_0}^t E(s) ds +  \lambda \int_{t_0}^t \| \chi(s,\cdot)\|_{L^{2\sigma +2}}^{2\sigma+2} \,ds -i  \beta(t) ,
\]
with $\beta(t)\in i \R$ the Berry phase, defined in \eqref{eq:berry}.
\end{theorem}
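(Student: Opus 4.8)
The plan is to build a two-scale approximate solution of the form
\[
\psi^\eps_{\rm app}(t,x) := \bigl(\chi(t,x) + \eps\,\gamma^\eps(t,x)\bigr)\,e^{-i\varphi^\eps(t)},
\]
with $\varphi^\eps$ exactly as in the statement (note that $\beta\in i\R$ makes $\varphi^\eps$ real-valued, so $e^{-i\varphi^\eps}$ has modulus one), and then to control the difference $w^\eps:=\psi^\eps-\psi^\eps_{\rm app}$. Substituting the ansatz into \eqref{eq:nls} and dividing by $e^{-i\varphi^\eps}$, the residual has vanishing $O(1)$ part by the eigenrelation $H(t)\chi(t,\cdot)=E(t)\chi(t,\cdot)$, and vanishing $O(\eps)$ part provided $\gamma^\eps$ is chosen to solve \eqref{eq:corrector}, whose leading structure (up to the sign/normalization conventions fixed there) is
\[
\bigl(H(t)-E(t)\bigr)\gamma^\eps = i\,\partial_t\chi - i\,\dot\beta(t)\,\chi + \lambda\,\|\chi(t,\cdot)\|_{L^{2\sigma+2}}^{2\sigma+2}\,\chi - \lambda\,|\chi|^{2\sigma}\chi .
\]
Since $E(t)$ is a simple eigenvalue separated from the rest of ${\rm spec}(H(t))$ by $\delta>0$, the Fredholm alternative makes this equation solvable, with $\gamma^\eps\perp\chi(t,\cdot)$, if and only if the right-hand side is orthogonal to $\chi(t,\cdot)$; a direct computation using $\|\chi\|_{L^2}=1$, $\int|\chi|^{2\sigma+2}=\|\chi\|_{L^{2\sigma+2}}^{2\sigma+2}$, and the definition \eqref{eq:berry} of the Berry phase shows that the two scalar nonlinear terms cancel and the two $\partial_t$-terms cancel as well — it is precisely this compatibility requirement that dictates the nonlinear phase modulation $\lambda\int_{t_0}^t\|\chi\|_{L^{2\sigma+2}}^{2\sigma+2}$ and the Berry phase in $\varphi^\eps$. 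Because $V\in C^1_{\rm b}(I;\mathcal S)$ and (by elliptic regularity) $\chi\in C^1_{\rm b}(I;\mathcal S)$, the reduced resolvent $(H(t)-E(t))^{-1}|_{\{\chi\}^\perp}$ and its time derivative are bounded on every $H^m$, so $\gamma^\eps\in C^1(J;H^k)$ with $\eps$-independent bounds, and the leftover residual $R^\eps_{\rm app}$ — the remaining $O(\eps^2)$ terms, among them $i\eps^2\partial_t\gamma^\eps$ and the higher-order pieces of $|\chi+\eps\gamma^\eps|^{2\sigma}(\chi+\eps\gamma^\eps)$ — satisfies $\sup_{t\in J}\|R^\eps_{\rm app}(t,\cdot)\|_{H^k}\le C\eps^2$.

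For the error estimate I first note that \eqref{eq:nls} is locally well-posed in $C(\cdot;H^k)$ for $k>\tfrac d2$ by the standard argument (there $H^k$ is an algebra and $z\mapsto|z|^{2\sigma}z$ is smooth), and that the local existence time is \emph{independent of $\eps$}: in the Duhamel formula the nonlinearity carries the coefficient $\lambda$, not $\lambda/\eps$ — this is the whole point of the scaling \eqref{Psi} — while the linear propagator $U^\eps(t,s)$ of $i\eps\partial_t=H(t)$ is bounded on $H^k$ uniformly in $\eps$ on compact time intervals (the crucial structural fact, which one proves using an energy adapted to $H(t)$ rather than to $\langle\nabla\rangle$, so that the dangerous $\tfrac1\eps$ cancels because $H(t)$ commutes with that energy). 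The difference solves $i\eps\partial_t w^\eps = H(t)w^\eps + \lambda\eps\bigl(|\psi^\eps|^{2\sigma}\psi^\eps - |\psi^\eps_{\rm app}|^{2\sigma}\psi^\eps_{\rm app}\bigr) - R^\eps_{\rm app}$ with $\|w^\eps(t_0)\|_{H^k}\le C_0\eps^{k+1}$ (here $\varphi^\eps(t_0)=0$ and $\gamma^\eps$ was matched). By Duhamel,
\[
w^\eps(t) = U^\eps(t,t_0)\,w^\eps(t_0) + \int_{t_0}^t U^\eps(t,s)\Bigl[\tfrac{\lambda}{i}\bigl(|\psi^\eps|^{2\sigma}\psi^\eps - |\psi^\eps_{\rm app}|^{2\sigma}\psi^\eps_{\rm app}\bigr)(s) - \tfrac{1}{i\eps}R^\eps_{\rm app}(s)\Bigr]\,ds,
\]
and one runs a continuity/bootstrap argument under the a priori assumption $\|\psi^\eps(t,\cdot)\|_{H^k}\le M$: the nonlinear difference is then $\le C(M)\,\|w^\eps(s,\cdot)\|_{H^k}$ in $H^k$ by the algebra property together with Lipschitz bounds for $|z|^{2\sigma}z$ on bounded sets, the term $\tfrac1\eps R^\eps_{\rm app}$ contributes only $\tfrac1\eps\cdot C\eps^2=C\eps$, and the uniform propagator bound gives $\|w^\eps(t,\cdot)\|_{H^k}\le C\bigl(\eps^{k+1}+\eps+\int_{t_0}^t\|w^\eps(s,\cdot)\|_{H^k}\,ds\bigr)$. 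Gronwall yields $\sup_{t\in J}\|w^\eps(t,\cdot)\|_{H^k}\le C'(J)\,\eps$, which for $\eps\le\eps_0(J)$ small keeps $\|\psi^\eps\|_{H^k}\le\|\psi^\eps_{\rm app}\|_{H^k}+1\le M$, closing the bootstrap and continuing $\psi^\eps$ to all of $J$. Finally $\|\psi^\eps(t,\cdot)-\chi(t,\cdot)e^{-i\varphi^\eps(t)}\|_{L^2}\le\|w^\eps(t,\cdot)\|_{L^2}+\eps\|\gamma^\eps(t,\cdot)\|_{L^2}\le C\eps$, which is the asserted bound (indeed the same rate holds in $H^k$).

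The main obstacle is the clash between the singular $\tfrac1\eps$ and the nonlinearity in the stability estimate. A naive $\|\langle\nabla\rangle^k w^\eps\|_{L^2}$-energy will not do, since its time derivative produces an uncancelled $\tfrac1\eps\langle\langle\nabla\rangle^k w^\eps,[\langle\nabla\rangle^k,V]w^\eps\rangle$, and Gronwall would then yield a factor $e^{C|J|/\eps}$ that overwhelms the $O(\eps^{k+1})$ initial error; nor does a direct $H^k$-bound on $\psi^\eps$ itself close, because the natural energy estimate for $\psi^\eps$ satisfies a Riccati-type inequality $\dot y\lesssim y+\lambda y^{\sigma+1}$ whose a priori bound may break down before the endpoint of the arbitrary interval $J$, even though the solution does not. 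The resolution combines two ingredients: (i) measuring the error in an $H(t)$-adapted norm — equivalently, exploiting the uniform-in-$\eps$ $H^k$-boundedness of $U^\eps$ — so that the linear flow is exactly conservative and the $\tfrac1\eps$ disappears; and (ii) estimating $w^\eps$ rather than $\psi^\eps$, so that the global bound on $\psi^\eps_{\rm app}$ controls the Lipschitz constants and, decisively, the weak-nonlinearity scaling \eqref{Psi} supplies the extra power of $\eps$ that converts $\tfrac1\eps\cdot\lambda\eps(\cdots)$ into an $O(1)$-coefficient, linear-in-$w^\eps$ term. A secondary delicate point is verifying the solvability condition for \eqref{eq:corrector}, as it is exactly this condition that pins down the precise nonlinear phase modulation in $\varphi^\eps$.
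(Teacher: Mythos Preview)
Your argument is correct, but it takes a genuinely different route from the paper's. The paper constructs a \emph{higher-order} approximate solution $\psi^\eps_N=e^{-i\varphi/\eps}\sum_{n=0}^N\eps^n U_n$ with $N>k$, obtains a residual of order $\eps^{N+1}$, and then runs the energy estimate in the \emph{ordinary} Sobolev norm $\|\partial^\alpha\eta^\eps\|_{L^2}$. There the commutator $[\nabla,H(t)]=\nabla V$ produces a term $\tfrac1\eps\|\eta^\eps\|_{L^2}$ in the inequality for $\|\nabla\eta^\eps\|_{L^2}$, so one power of $\eps$ is lost at each derivative and the final bound is $\|\eta^\eps\|_{H^k}\lesssim\eps^{N-k}$; the bootstrap closes only because $N>k$. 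This derivative loss is precisely why the paper needs the multi-term corrector $\gamma^\eps=\sum_{n=1}^N\eps^{n-1}v_n(t_0,\cdot)$ in the hypothesis and why the $H^k$ error is weaker than the $L^2$ one.

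Your key observation---that the propagator $U^\eps(t,s)$ of $i\eps\partial_t=H(t)$ is bounded on $H^k$ \emph{uniformly in $\eps$} once one measures size by $\|(c+H(t))^{k/2}\cdot\|_{L^2}$, because $H(t)$ commutes with this weight at each fixed $t$ and only $\partial_t V$ (no $1/\eps$) survives in the energy derivative---eliminates the loss entirely. With that in hand a single corrector term and a residual of order $\eps^2$ already suffice, and Gronwall gives $\|w^\eps\|_{H^k}\le C\eps$ directly. Your approach thus yields a sharper $H^k$ estimate and would in fact allow a weaker well-preparedness assumption on the data than the theorem states. One small wrinkle: the $\gamma^\eps$ in \eqref{eq:corrector} is the \emph{multi-term} corrector, while your elliptic equation determines only its leading part $v_1$; the two differ by $O(\eps)$ in $H^k$, so your claim $\|w^\eps(t_0)\|_{H^k}\le C_0\eps^{k+1}$ should really read $O(\eps^2)$, which is still more than enough for the conclusion. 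The paper's route is more elementary in that it avoids the $H(t)$-adapted norm machinery, at the cost of the longer expansion and the derivative loss.
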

We see that the presence of the nonlinearity shows up in the form of a slowly varying phase modulation within the 
leading order approximation of $\psi^\e$. An immediate consequence is the following corollary for the associated
spectral projectors (for which we use Dirac's notation):
\begin{corollary}
Under the same assumptions as in Theorem \ref{thm:main}, we obtain 
\[
\sup_{t\in J} \big \| |\psi^\e(t,\cdot)\rangle \langle  \psi^\e(t,\cdot) | -  |\chi(t,\cdot)\rangle \langle  \chi(t,\cdot)| \big\|_{L^2\to L^2} \le C\e,
\]
\end{corollary}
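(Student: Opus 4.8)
The plan is to deduce the corollary directly from the $L^2$-estimate of Theorem \ref{thm:main}, using the elementary fact that closeness of wave functions up to a \emph{unimodular} phase implies closeness of the associated rank-one projectors. Writing $\Phi^\e(t,x):=\chi(t,x)\, e^{-i\varphi^\e(t)}$, I would first record that the phase is real, $\varphi^\e(t)\in\R$ (in particular $-i\beta(t)\in\R$ since $\beta(t)\in i\R$), so that $|e^{-i\varphi^\e(t)}|=1$ and hence
\[
|\Phi^\e(t,\cdot)\rangle\langle \Phi^\e(t,\cdot)| = |\chi(t,\cdot)\rangle\langle \chi(t,\cdot)| .
\]
Thus the quantity to be estimated is $\big\| |\psi^\e\rangle\langle\psi^\e| - |\Phi^\e\rangle\langle\Phi^\e| \big\|_{L^2\to L^2}$.

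The second step is the standard algebraic identity: by bilinearity of $(a,b)\mapsto|a\rangle\langle b|$, for each fixed $t\in J$,
\[
|\psi^\e\rangle\langle\psi^\e| - |\Phi^\e\rangle\langle\Phi^\e| = |\psi^\e-\Phi^\e\rangle\langle\psi^\e| + |\Phi^\e\rangle\langle\psi^\e-\Phi^\e| ,
\]
together with the norm formula $\big\| |a\rangle\langle b| \big\|_{L^2\to L^2}=\|a\|_{L^2}\|b\|_{L^2}$ (Cauchy--Schwarz, with equality attained at $b/\|b\|_{L^2}$). Combining these gives, uniformly in $t\in J$,
\[
\big\| |\psi^\e\rangle\langle\psi^\e| - |\Phi^\e\rangle\langle\Phi^\e| \big\|_{L^2\to L^2} \le \|\psi^\e-\Phi^\e\|_{L^2}\,\big(\|\psi^\e\|_{L^2}+\|\Phi^\e\|_{L^2}\big) .
\]

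Finally, I would insert the bounds from Theorem \ref{thm:main}: $\sup_{t\in J}\|\psi^\e-\Phi^\e\|_{L^2}\le C\e$, while $\|\Phi^\e(t,\cdot)\|_{L^2}=\|\chi(t,\cdot)\|_{L^2}=1$ and, by the triangle inequality, $\|\psi^\e(t,\cdot)\|_{L^2}\le 1+C\e\le 2$ for $\e\le\e_0(J)$ small enough. This yields the claim with a new constant. There is no genuine obstacle here; the only points worth stating carefully are that the phase correction $e^{-i\varphi^\e(t)}$ is unimodular, so it cancels in the projector, and that the $L^2$-norm of the solution stays bounded on $J$ — both of which are immediate from the theorem.
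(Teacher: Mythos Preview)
Your argument is correct and is exactly the approach the paper has in mind: the corollary is stated as ``an immediate consequence'' without proof, and when the paper later proves the analogous Corollary~\ref{cor:nonlin} it uses precisely your splitting $|\psi^\e\rangle\langle\psi^\e| - |\Phi^\e\rangle\langle\Phi^\e|$ into two rank-one terms together with Cauchy--Schwarz. The only cosmetic difference is that you could alternatively bound $\|\psi^\e(t,\cdot)\|_{L^2}$ via mass conservation (Lemma~\ref{lem:exist}) rather than the triangle inequality, but either works.
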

In other words, in terms of spectral projections, the {\it linear} time-adiabatic theorem is still valid under weakly
nonlinear perturbations of the form \eqref{eq:nls}.

The proof of Theorem \ref{thm:main} relies on a multiple scales expansion which yields an approximate solution to \eqref{eq:nls}. This approximate solution
is proven to be well-defined (under our assumptions) and, in a second step, used to infer the existence of a true 
solution $\psi^\eps$ which is asymptotically close to the approximate one for times of order $O(1)$.
Unfortunately, in order to control the nonlinear effects within our method of proof, we require that the initial data is sufficiently well-prepared (in the sense described above), 
even if one is only interested in the leading order approximation. In the language of, e.g., \cite{Te}, we require 
the initial data to be concentrated in a {\it super-adiabatic} subspace. This is very similar to the situation encountered in \cite{CMS}, where the 
semiclassical asymptotics for weakly nonlinear Schr\"odinger equations 
with highly oscillatory periodic potentials is studied. In fact, the basic strategy for the proof of Theorem \ref{thm:main} is similar to the one used in \cite{CMS}.

Clearly, Theorem \ref{thm:main} can be reformulated in terms of $\Psi^\e$, yielding a time-adiabatic result for asymptotically small solutions of order $O(\sqrt{\e})$. In 
particular,we have
\begin{equation}\label{esti}
\sup_{t\in J} \big \| |\Psi^\e(t,\cdot)\rangle \langle  \Psi^\e(t,\cdot) | -  \e |\chi(t,\cdot)\rangle \langle  \chi(t,\cdot)| \big\|_{L^2\to L^2} \le C\e^{2},
\end{equation}
where $\Psi^\e = \sqrt{\eps}\psi^\e$. 
In this case, a connection to the theory of {\it nonlinear bound states} for NLS equations becomes apparent. To this end, consider the ``stationary" Schr\"odinger 
equation associated to \eqref{eq:NLS}, i.e., 
\begin{equation}\label{statnonl}
 -\frac{1}{2} \Delta \Phi + V(t,x) \Phi + \lambda |\Phi|^{2\sigma} \Phi = E_* \Phi,
\end{equation}
where $E_*\equiv E_*(t)\in \R$ is a nonlinear energy-eigenvalue. Now, let $t\in \R$ be {\it fixed} assume that 
the potential $V(t,x)$ is such that $H(t)$ has a discrete (linear) eigenvalue/eigenfunction pair $(E, \chi)$. Then, classical bifurcation theory (cf., \cite{Ni}) 
implies that for $E_*\approx E$ a nonlinear bound state solution $\Phi$ exists, which is approximately given by a small multiple of $\chi$ (see Section \ref{sec:nonlin} for more details).
In the context of NLS, this has been rigorously proved in a number of papers, cf. \cite{PiWa, RoWe, Ts, TY, We}. 
Combining this fact with the result in Theorem \ref{thm:main} 
will allow us to prove that in the regime of $O(\sqrt{\eps})$-solutions these nonlinear bound states are also adiabatically stable, at least in terms of spectral projections. 
More precisely, we have the following result.

\begin{corollary} \label{cor:nonlin} 
In addition to the assumptions of Theorem \ref{thm:main}, assume that for a compact $J\subset I$ there is an $M_0(J)>0$ such that for all $M\le M_0$, 
there exists a family $\Phi\in C(J, H^2(\R^d))$ of solutions to \eqref{statnonl} with $\| \Phi(t,\cdot)\|_{L^2} = M$, for all $t\in J$. 
Let $\eps \in (0,\min(\eps_0(J), M_0(J))]$ and denote by $\Phi^\e(t,x)$ a family of bound states with norm $\| \Phi ^\e (t,\cdot)\|_{L^2} = \sqrt{\eps}$. 
Furthermore, let $\Psi^\e$ be the solution to \eqref{eq:NLS} with initial data $\Psi^\e_{\rm in}= \sqrt{\e} \psi_{\rm in}^\e$, 
where $\psi^\e$ is as in Theorem \ref{thm:main}.

Then there exist a constant $K>0$ such that
\[
\sup_{t\in J} \big \| |\Psi^\e(t,\cdot)\rangle \langle  \Psi^\e(t,\cdot) | -  |\Phi^\e(t,\cdot)\rangle \langle  \Phi^\e(t,\cdot)| \big\|_{L^2\to L^2}  \le K \eps^{\min(2, \sigma+1/2)}.
\]
\end{corollary}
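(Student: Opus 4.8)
The plan is to prove Corollary~\ref{cor:nonlin} by comparing both rank-one projectors in its statement to the common ``linear'' reference $\e\,|\chi(t,\cdot)\rangle\langle\chi(t,\cdot)|$ and invoking the triangle inequality. One of the two comparisons is already available: by \eqref{esti},
\[
\sup_{t\in J}\big\|\,|\Psi^\e(t,\cdot)\rangle\langle\Psi^\e(t,\cdot)| - \e\,|\chi(t,\cdot)\rangle\langle\chi(t,\cdot)|\,\big\|_{L^2\to L^2}\le C\e^2 .
\]
So it remains to bound $\big\|\,|\Phi^\e(t,\cdot)\rangle\langle\Phi^\e(t,\cdot)| - \e\,|\chi(t,\cdot)\rangle\langle\chi(t,\cdot)|\,\big\|_{L^2\to L^2}$ and then to add the two estimates, using $\e^2+\e^{\sigma+1/2}\le 2\,\e^{\min(2,\sigma+1/2)}$ for $0<\e<1$.

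For the nonlinear bound state I would argue by linear perturbation theory, pointwise in $t\in J$. Writing $\Phi^\e(t,\cdot)=\sqrt\e\,\phi^\e(t,\cdot)$ with $\|\phi^\e(t,\cdot)\|_{L^2}=1$, the stationary equation \eqref{statnonl} becomes the genuine linear eigenvalue relation
\[
\big(H(t)+\lambda\e^\sigma|\phi^\e|^{2\sigma}\big)\phi^\e=E_*(t)\,\phi^\e .
\]
Since $\Phi^\e(t,\cdot)$ belongs to the branch bifurcating from the linear pair $(E(t),\chi(t))$ — so that $\phi^\e(t,\cdot)$ stays bounded in $H^2(\R^d)$ and $\phi^\e(t,\cdot)\to e^{i\theta(t)}\chi(t,\cdot)$ as $\e\to0$ for a suitable phase $\theta(t)\in\R$, uniformly on the compact set $J$ (this is part of the content of the cited bifurcation results \cite{PiWa, RoWe, Ts, TY, We}, and can alternatively be extracted from an a priori estimate for \eqref{statnonl} together with $V\in C^1_{\rm b}(I;\mathcal S(\R^d))$) — the multiplication operator $\lambda\e^\sigma|\phi^\e|^{2\sigma}$ has $\mathcal L(L^2)$-norm $O(\e^\sigma)$, so $H(t)+\lambda\e^\sigma|\phi^\e|^{2\sigma}$ is an $O(\e^\sigma)$ perturbation of $H(t)$ in operator norm. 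Because $E(t)$ is simple and separated from ${\rm spec}(H(t))\setminus\{E(t)\}$ by $\delta>0$ uniformly in $t$, analytic perturbation theory applied to the Riesz projection along a circle of radius $\delta/2$ around $E(t)$ shows that, for $\e$ small enough, the perturbed operator has precisely one eigenvalue inside that circle, namely $E_*(t)=E(t)+O(\e^\sigma)$, with rank-one spectral projection $P^\e(t)$ satisfying $\|P^\e(t)-|\chi(t,\cdot)\rangle\langle\chi(t,\cdot)|\|_{L^2\to L^2}=O(\e^\sigma)$, uniformly on $J$. As this eigenvalue is simple and $\phi^\e(t,\cdot)$ is a corresponding unit eigenvector, $P^\e(t)=|\phi^\e(t,\cdot)\rangle\langle\phi^\e(t,\cdot)|$; equivalently $\|\Phi^\e(t,\cdot)-\sqrt\e\,e^{i\theta(t)}\chi(t,\cdot)\|_{L^2}=O(\e^{\sigma+1/2})$.

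Passing from eigenfunctions to projectors is then immediate: $|\Phi^\e\rangle\langle\Phi^\e|=\e\,|\phi^\e\rangle\langle\phi^\e|$ and $\e\,|e^{i\theta}\chi\rangle\langle e^{i\theta}\chi|=\e\,|\chi\rangle\langle\chi|$, so
\[
\big\|\,|\Phi^\e(t,\cdot)\rangle\langle\Phi^\e(t,\cdot)| - \e\,|\chi(t,\cdot)\rangle\langle\chi(t,\cdot)|\,\big\|_{L^2\to L^2}=\e\,\big\|P^\e(t)-|\chi(t,\cdot)\rangle\langle\chi(t,\cdot)|\big\|_{L^2\to L^2}=O(\e^{\sigma+1}),
\]
which is in particular $O(\e^{\min(2,\sigma+1/2)})$, uniformly on $J$; combined with the $O(\e^2)$ bound above this yields the corollary. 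The step I expect to be the main obstacle is this perturbative one — in particular, securing uniformly in $t\in J$ that $\phi^\e(t,\cdot)=\Phi^\e(t,\cdot)/\sqrt\e$ is bounded in $H^2$ (equivalently, that the nonlinear term is a bona fide $O(\e^\sigma)$ operator-norm perturbation of $H(t)$), and keeping track of the fact that the family singled out by the normalization $\|\Phi^\e(t,\cdot)\|_{L^2}=\sqrt\e$ is the one attached to $E(t)$ rather than to some other part of ${\rm spec}(H(t))$; once this is in place, the remaining steps are routine.
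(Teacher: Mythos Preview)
Your proof is correct and follows a genuinely different route from the paper's. The paper does not triangulate through $\e\,|\chi\rangle\langle\chi|$; instead it first establishes a wavefunction estimate (Proposition~\ref{prop:nonlin1})
\[
\sup_{t\in J}\big\|\Psi^\e(t,\cdot)-\Phi^\e(t,\cdot)\,e^{-i\varphi^\e(t)}\big\|_{L^2}\le K\,\e^{\min(3/2,\sigma)},
\]
obtained by inserting $\sqrt{\e}\,\chi\,e^{-i\varphi^\e}$, using Theorem~\ref{thm:main} for the $\Psi^\e$--$\chi$ difference, and the bifurcation expansion of Lemma~\ref{lem:bound} for the $\Phi^\e$--$\chi$ difference. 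Only then does it pass to projectors via the elementary bound $\big\||f\rangle\langle f|-|g\rangle\langle g|\big\|\le(\|f\|+\|g\|)\|f-g\|$, picking up one extra factor of $\sqrt{\e}$ from $\|\Psi^\e\|,\|\Phi^\e\|=O(\sqrt{\e})$. Your argument instead treats the two projector differences separately: \eqref{esti} for the $\Psi^\e$ half, and a clean Riesz-projection perturbation estimate for the $\Phi^\e$ half, interpreting \eqref{statnonl} as a linear eigenvalue problem for $H(t)+\lambda\e^\sigma|\phi^\e|^{2\sigma}$.

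What each approach buys: the paper's route yields, as a by-product, the wavefunction-level statement of Proposition~\ref{prop:nonlin1} with explicit phase $\varphi^\e$, which you do not obtain. Conversely, your resolvent argument gives $\big\||\Phi^\e\rangle\langle\Phi^\e|-\e|\chi\rangle\langle\chi|\big\|=O(\e^{\sigma+1})$, a full half-power better than the $O(\e^{\sigma+1/2})$ implicit in the paper's estimate; in particular, for $\sigma=1$ your argument delivers $O(\e^2)$ rather than the stated $O(\e^{3/2})$. The technical point you flag---the uniform-in-$(t,\e)$ bound on $\|\phi^\e\|_{L^\infty}$ needed to make the perturbation $O(\e^\sigma)$ in operator norm---is exactly right: it follows from Lemma~\ref{lem:bound} (which gives $\|\phi^\e-(c/\sqrt{\e})\chi\|_{H^2}=O(\e^{\sigma-1/2})$ with $c/\sqrt{\e}\to 1$) together with $H^2\hookrightarrow L^\infty$ in dimensions $d\le 3$; in higher dimensions one needs an additional elliptic bootstrap on \eqref{statnonl}, which is routine under Assumption~\ref{ass1}.
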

Note that for $\sigma=1$ (cubic nonlinearity) this estimate is not as good as the one in \eqref{esti}. In addition, Corollary \ref{cor:nonlin} does not provide any information 
on the phase, an issue which will be discussed in more detail in Section \ref{sec:nonlin}.

We finally want to mention that in the recent PhD thesis \cite{Gr} (see also the upcoming paper \cite{GG}) a similar result has been proven (for $d=3$, $\lambda>0$ and $\sigma =1$). 
Namely, that there exists a family of nonlinear bound states $t\mapsto \Phi(t,\cdot)$ with $\| \Phi (t,\cdot) \|_{L^2} = M\ll 1$ sufficiently small, such that 
for $\eps \ll 1$ the solution  $\Psi^\e$ of \eqref{eq:NLS} satisfies
\[
\sup_{t\in [0,1]} \big \| |\Psi^\e(t,\cdot)\rangle \langle  \Psi^\e(t,\cdot) | -  |\Phi(t,\cdot)\rangle \langle  \Phi(t,\cdot)| \big\|_{L^2\to L^2}  \lesssim  \eps.
\]
Indeed, the author of \cite{Gr} proves a slightly stronger statement in terms of the wave function, not only the spectral projections (their phase information, however, is not 
as precise as the one we have in Theorem \ref{thm:main}). Note that Corollary \ref{cor:nonlin} is consistent with the one from \cite{Gr}, 
if one specifies the latter to solutions of size $O(\sqrt{\eps})$ (but we require more assumptions on the initial data due to the fact that we need to go through the proof of Theorem \ref{thm:main}). 
The main difference between the two results seems to be that in our case there is only a single small parameter $\eps$ (induced by the 
dimensionless form of the NLS itself) in which the size of the solution (or, equivalently, the strength of the nonlinearity) is measured. In comparison to that, \cite{Gr} considers 
both $M$ and $\eps$ to be small and independent, so that for a given $M\ll 1$ one can choose an $\eps$ sufficiently small 
to derive a nonlinear time-adiabatic result.\\

The paper is now organized as follows: In Section \ref{sec:constr} we shall show how to obtain the leading order approximation by means of 
formal asymptotic expansions. These expansions will then be made mathematically rigorous in Section \ref{sec:frame}. The nonlinear stability 
of our approximation is proved in Section \ref{sec:stab}, yielding the proof of Theorem \ref{thm:main}. Possible extensions and variations of 
our results, in particular, the proof 
of Corollary \ref{cor:nonlin} are then discussed in Section \ref{sec:ext}.

\section{Formal construction of the approximate solution}\label{sec:constr}

In this section we shall first recall the standard adiabatic expansion for linear equations of Schr\"odinger type and show (in a second step) how to 
include the case of (sub-)critical nonlinearities. 
To this end, we consider \eqref{eq:nls} in the case $\lambda =0$ and seek a solution in the form 
\[
\psi^\e(t,x) = e^{-i \varphi(t) / \e} \mathcal U^\e(t,x),
\]
where $\varphi(t)\in \R$ is some sufficiently smooth phase function, and the complex-valued amplitude $\mathcal U^\e$ is assumed to be of the form
\begin{equation}\label{eq:U}
\mathcal U^\e(t,x) \Eq \eps 0 \sum_{n \ge 0} \e^n U_n(t,x),
\end{equation}
in the sense of formal asymptotic expansions. Plugging this into \eqref{eq:nls} yields
\[
H(t) \mathcal U^\e = \dot \varphi(t) \mathcal U^\e + i \e \partial_t \mathcal U^\e ,
\]
where $H(t)$ is given by \eqref{eq:ham}.
Next, we plug in \eqref{eq:U} and equate powers in $\e$.
At leading order, i.e., by equating terms of order $O(1)$, 
we find:
\begin{equation}\label{eq:EVproblem}
 H(t) U_0(t,x) = \dot \varphi(t) U_0 (t,x) .
\end{equation}
This can be seen as an eigenvalue problem for the operator $H(t)$ with eigenvalue $E(t)=\dot \varphi(t)$ and we consequently conclude that
\begin{equation}\label{eq:phi}
\varphi(t) = \int_{t_0}^t E(s) \, ds,
\end{equation}
the so-called dynamic phase. Let $I\subseteq \R$ be an open time-interval containing $t_0$. 
Assuming for the moment that $E(t)$ is a simple eigenvalue for all $t\in I$, with associated normalized eigenfunction $\chi(t,\cdot)\in L^2(\R^d)$, 
we infer $U_0(t,x) = u_0(t) \chi(t,x)$, for some yet to be determined coefficient function $u_0(t)\in \C$. 

Next, by equating terms of order $O(\e)$, we find the following inhomogeneous equation
\begin{equation}\label{eq:step21}
H(t) U_1(t,x) = \dot \varphi(t) U_1(t,x) + i \partial_t U_0(t,x).
\end{equation}
Using the information from the step before, this can be rewritten as
\begin{equation}\label{eq:step2}
L_E(t) U_1(t,x) = i\big( \dot u_0(t) \chi(t,x)+ u_0(t) \partial_t \chi(t,x) \big) .
\end{equation}
where from now on, we shall denote 
\[
L_E(t) = H(t) - E(t).
\]
The kernel of $L_E(t)$ is given by ${\rm span}(\chi(t,\cdot))$ and we consequently decompose
\begin{equation}\label{eq:U1}
U_1(t,x) =  u_1(t) \chi(t,x) + v_1(t,x),
\end{equation}
where $\langle v_1(t,\cdot), \chi(t,\cdot)\rangle_{L^2}=0$, for all $t\in I$.

In order to guarantee that \eqref{eq:step2} has a solution, Fredholm's alternative asserts that the right hand side of \eqref{eq:step2} 
has to be orthogonal to $\chi(t,\cdot)$, for all $t\in I$. Taking the $L^2(\R^d)$ inner product of \eqref{eq:step2}
with $\chi$ gives
\begin{equation}\label{eq:ode}
\frac{d u_0} {dt} + u_0 \langle \partial_t \chi(t,\cdot), \chi(t,\cdot)\rangle_{L^2} = 0,
\end{equation}
and thus (up to a multiplicative constant which we shall choose equal to 1 for simplicity), we find
\[
u_0(t) = e^{-\beta(t)},
\]
where 
\begin{equation}\label{eq:berry}
\beta(t) =\int_{t_0}^t  \langle \partial_t \chi(s,\cdot), \chi(s,\cdot) \rangle_{L^2} \, ds,
\end{equation}
denotes the Berry phase term \cite{Be1, Be}. 
Note that $\beta(t) \in i \R$, for all $t\in I$, as one can easily see from differentiating the normalization condition 
$\langle \chi(t, \cdot), \chi(t,\cdot)\rangle_{L^2} = 1$. As a consequence, we also have that solutions to \eqref{eq:ode} satisfy \[
|u(t,x)| = |u(t_0,x)|, \quad \forall t\in I.\]
In summary, we find the well known leading order approximation of linear 
time-adiabatic theory. Namely, that for $\e \to 0$ the solution $\psi^\e$ behaves like
\[
\psi^\e(t,x)\Eq \eps 0  \exp\left(-  \frac{i}{\e} \int_{t_0}^t E(s) \, ds - \beta(t) \right)\chi(t,x).
\]
\begin{remark} \label{rem:berry} Concerning the significance of the Berry phase, we first note that the eigenvalue equation 
\[
H(t) \chi(t,\cdot) = E(t) \chi(t,\cdot),
\]
does not uniquely determine the eigenfunction $\chi(t,\cdot)$, even if one imposes the normalization condition $\| \chi(t,\cdot) \|_{L^2}=1$. 
One still has the freedom to change
\[
\chi(t,\cdot) \to \tilde \chi(t, \cdot) = \chi(t, \cdot) e^{i S(t)}
\]
with some (smooth) phase $S(t)\in \R$. Under such a gauge transformation the Berry phase changes by
\[
\beta(t) \to \tilde \beta(t) = \beta(t) + iS(t) - iS(t_0).
\]
This allows to choose an appropriate $S$ such that $\tilde \beta \equiv 0$, which is the reason why the Berry phase has historically been ignored for quite some time. 
However, in the case of a periodic Hamiltonian, i.e., $H(t_0) = H(t_0+T)$ for some $T>0$, one usually imposes (for physical reasons) 
the additional assumption that $\chi(t_0,\cdot) = \chi(t_0+T, \cdot)$.
In this case, the fact that $\tilde \chi$ should be single valued implies
\[
S(t_0) - S(t_0+T) = 2\pi n, \quad n \in \mathbb Z.
\]
This shows that $\beta(t_0 +T)$ can only be changed by an integer multiple of $2\pi i$ and thus can not be gauged away in general, as has been noted in \cite{Be1, Be}. 
Note that this problem remains, even in the case where $H(t)$ is real (as it is true in the present work) and thus $H(t)$ commutes with the operator of complex conjugation $\Theta$. 
In this case we infer that $[P(t), \Theta]=0$, and hence there exists a real-valued eigenfunction $\chi(t, \cdot) \in {\rm ran} \, P(t)$ which consequently 
satisfies $\langle \partial_t \chi(t,\cdot) , \chi (t,\cdot) \rangle_{L^2} \equiv 0$.
This eigenfunction, however, does not necessarily satisfy the periodicity condition imposed before. 
In the present work, we do not want to exclude the possibility of a periodic $H(t)$ and thus refrain from gauging away the Berry phase. 
For a general discussion of the physical significance of geometric phases (of which the Berry phase is one particular example), we refer to \cite{BMKNZ, XMN}. 
\end{remark}

With this in hand, it is possible to determine $v_1$ through \eqref{eq:step2}. At least formally, this yields
\[
v_1(t,x) = i L^{-1}_E(t) \Big( \dot u_0(t) \chi (t,x)+ u_0(t) \partial_t \chi(t,x) \Big) =  i L^{-1}_E(t) \Big(u_0(t) \partial_t \chi(t,x) \Big),
\]
where we denote the partial inverse (or, partial resolvent) of $L_E(t)$ by
\begin{equation}\label{eq:resol}
L^{-1}_E(t)  : =  (1-P(t)) (H(t)-E(t))^{-1} (1-P(t)),
\end{equation}
with $P (t)=| \chi(t,\cdot) \rangle \langle \chi(t,\cdot) |$ being the projection onto the eigenspace corresponding to $E(t)\in \R$.
Note that this also shows that initially $v_1(t_0,x)   \not =0$, in general. 

The remaining unknown $u_1$ appearing in \eqref{eq:U1} can then be obtained by equating terms of order $O(\e^2)$. 
Indeed, by looking at the solvability condition for 
\[
L_E(t) U_2(t,x) = i\partial_t U_1\equiv  i\left( \dot u_1(t) \chi + u_1(t) \partial_t \chi(t,x) \right) + i \partial_t v_1,
\]
one finds that $u_1(t)$ solves the following differential equation
\[
\dot u_1  + \beta(t) u_1 + \langle \partial_t v_1(t,\cdot), \chi (t,\cdot) \rangle_{L^2} =0.
\]
Choosing, for simplicity, $u_1(t_0)=0$, we get
\[
u_1(t) = - e^{-\beta(t)} \int_{t_0}^t \langle \partial_t v_1(s,\cdot), \chi (s,\cdot)\rangle_{L^2} \, e^{\beta(s)} \, ds.
\]
By repeating these steps, one easily finds that all amplitudes $U_n(t,x)$, $n\ge 1$, appearing in \eqref{eq:U}, are of the form 
\begin{equation}\label{eq:Un}
U_n(t,x) = u_n(t)\chi(t,x)+ v_n (t,x) ,
\end{equation}
where every $u_n(t)$ is determined through an ordinary differential equation obtained from the solvability condition at order 
$O(\e^{n+1})$, together with the initial data $u_n(t_0)=0$.\\

Next, we want to understand how to take into account a (sub-)critical nonlinearity in our asymptotic expansion. 
To this end, we first note that \eqref{eq:U} yields
\[
\e |\mathcal U^\e|^{2\sigma}\mathcal U^\e \sim \e |U_0|^{2\sigma}U_0 + \e^{2}\left( (\sigma +1) |U_0|^{2\sigma} U_1 +  \e \sigma |U_0|^{2\sigma -2} U_0^2 \overline U_1\right)  + O(\eps^{3}).
\]
Thus, the leading order eigenvalue problem \eqref{eq:EVproblem} does not change. The nonlinearity enters only in the expressions of order $O(\e)$ or higher. For the former 
we find the following analog of \eqref{eq:step21}:
\begin{equation}\label{eq:nlstep2}
L_E(t) U_1(t,x) = i \partial_t U_0(t,x)-  \lambda |U_0|^{2 \sigma} U_0.
\end{equation}
Here, we can use our knowledge from before to make the following ansatz for $U_0$:
\begin{equation}\label{eq:nlamp}
U_0(t,x) = \chi(t,x) e^{ - \beta(t) - i \theta(t)},
\end{equation}
where $\beta(t)$ is defined in \eqref{eq:berry} and $\theta(t)\in\R$ is some other phase yet to be determined. 
By doing so, the solvability condition requiring that the right hand side of \eqref{eq:nlstep2} has to be orthogonal to ${\rm ker}\, L_E(t)$ yields
\[
\frac{d \theta }{dt}= \lambda \langle |\chi(t,\cdot)|^{2\sigma} \chi(t,\cdot), \chi(t,\cdot)\rangle_{L^2} =\lambda  \int_{\R^d} |\chi(t,x)|^{2\sigma+2} \, dx,
\]
where we have used the fact that $\beta(t) \in i\R$. Assuming, for the moment, that $\chi(t,\cdot) \in L^{2\sigma +2}(\R^d)$, we thus find 
\begin{equation}\label{eq:nlphase}
\theta(t) = \lambda \int_{t_0}^t \| \chi(s,\cdot)\|_{L^{2\sigma +2}}^{2\sigma+2} \, ds.
\end{equation}
In view of \eqref{eq:nlamp}, we see that the nonlinearity contributes at leading order by 
adding an additional nonlinear phase modulation, i.e.,
\[
\psi^\e(t,x)\Eq \eps 0  \exp\left(- \frac{i}{\eps} \int_{t_0}^t E(s) \, ds -i \lambda \int_{t_0}^t \| \chi(s,\cdot)\|_{L^{2\sigma +2}}^{2\sigma+2} \,ds  - \beta(t)  \right)\chi(t,x).
\]
Note that even though the nonlinear phase modulation is slowly varying, one should still think of it as a small (i.e., of order $O(\eps)$) nonlinear modification of the dynamical phase. 

\begin{remark} \label{rem:critical}
It is clear by now that the choice \eqref{Psi} is critical with respect to our asymptotic expansion. Indeed, if instead of \eqref{Psi} we set 
\[
\Psi^\e(t,x) = \e^{\alpha/(2\sigma)} \psi^\e(t,x),
\]
then instead of \eqref{eq:nls} we would obtain
\begin{equation}\label{eq:nlsalpha}
i \eps \partial_t \psi^\e = -\frac{1}{2} \Delta \psi^\e + V(t,x) \psi^\e + \lambda \e^\alpha |\psi^\e|^{2\sigma} \psi^\e
\end{equation}
Performing the same asymptotic expansion as before, we see that if $ \alpha \ge 2$, then {\it no} nonlinear effects are present in the leading order asymptotics. 
The problem thus becomes essentially linearizable, 
and can be considered sub-critical with respect to our asymptotic analysis. A somewhat intermediate regime is obtained in 
the case where $\alpha$ is no longer a natural number 
and such that $1<\alpha  <2$. This situation will be discussed in more detail in Section \ref{sec:inter}.
Finally, if $0\le \alpha <1$, the problem can be considered super-critical with respect to our asymptotic expansion. 
The case $\alpha =0$ is probably the most relevant from the physics point of view, but clearly also mathematically much more challenging and 
thus beyond the scope of the present work. One can expect this problem to be intimately 
related to the modulation stability of nonlinear ground states studied in \cite{We} (see also \cite{SoWe}).
\end{remark}


\section{A mathematical framework for asymptotic expansions}\label{sec:frame}

In this section we will prove that the solution obtained through the formal multiple scales approximation above is indeed well defined and furnishes 
an approximate solution to \eqref{eq:nls}. To this end, we shall impose the following basic assumptions on
the time-dependent potential:

\begin{hyp}\label{ass1} The potential $V(t,x)$ satisfies $V\in C^1_{\rm b}(I; \mathcal S(\R^d))$, where $\mathcal S$ denotes the space of smooth 
and rapidly decaying functions.
\end{hyp}
\begin{remark}
This assumption is mainly imposed for the sake of a simple and clean presentation but 
certainly far from optimal concerning the regularity of $V$. Indeed, all of our results can be reformulated for 
potentials $V(t,\cdot)\in C_{\rm b}^{k}(\R^d)$ and vanishing at infinity. However, it turns out that 
at different stages of our proofs we require different (and relatively strong) bounds on $k$, which 
result in a somewhat tedious regularity count that we want to avoid. 
\end{remark}

Nest, we fix $t\in I$. Then it is well known (see, e.g., \cite[Chapter 10.1]{Tebook}) that for $V(t,\cdot) $ bounded, i.e., $k=0$, and decaying at infinity,
the Hamiltonian $H(t)$ is a self-adjoint operator with $\text{dom}(H(t))=H^2(\R^d)\subset L^2(\R^d)$. 
Moreover, for any fixed $t\in I$ the spectrum of $H(t)$ is of the standard form, i.e.,
\[
\text{spec}(H(t)) = [0,\infty) \cup \{ E_j(t) \, | \, - E_j(t)>0, \quad j=0,1, \dots \}.
\]
see, e.g., \cite{Hi}.
Of course as these eigenvalues vary in time, they might cross each other, or disappear into the continuous spectrum. Our main assumption is 
that the eigenvalue $E(t)$ we are interested in stays separated from the rest of the spectrum by a spectral gap. 

\begin{hyp}\label{ass2}
There exists a simple eigenvalue $E(t)\in {\rm spec}(H(t))$ and a constant $\delta>0$, satisfying
\begin{equation}\label{gap}
\inf_{t\in I} {\rm dist} (E(t), {\rm spec}\big(H(t) \setminus E(t))\big) = \delta .
\end{equation}
\end{hyp}
Note that this implies $E(t) \le -\delta$, for all $t\in I$. 
Denoting by $\chi(t, \cdot)\in L^2(\R^d)$ a normalized eigenfunction corresponding to such a well separated eigenvalue $E(t)$, we have the following regularity result.
\begin{lemma}\label{chiregularity}
Let Assumptions \ref{ass1} and \ref{ass2} hold, then we can choose $\chi: I \to H^{k}(\R^d)$, such that $\chi \in C_{\rm b}^1(I, H^{k}(\R^d))$ for any $k\ge 0$, and such that 
$\| \chi (t,\cdot) \|_{L^2}=1$, for all $t\in I$. 
\end{lemma}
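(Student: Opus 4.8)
The plan is to establish Lemma~\ref{chiregularity} via the Riesz projection formalism combined with elliptic regularity for the Schr\"odinger operator $H(t) = -\frac12\Delta + V(t,x)$. First I would define, for each fixed $t\in I$, the spectral projection onto the eigenspace of $E(t)$ by the contour integral
\[
P(t) = \frac{1}{2\pi i} \oint_{\Gamma(t)} \big(z - H(t)\big)^{-1}\, dz,
\]
where $\Gamma(t)$ is a small circle of radius $\delta/2$ centered at $E(t)$; by Assumption~\ref{ass2} this contour encloses only the simple eigenvalue $E(t)$ and stays at distance $\ge \delta/2$ from the rest of $\mathrm{spec}(H(t))$, uniformly in $t$. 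Since $E(t)$ is simple, $\mathrm{rank}\,P(t)=1$. The first substantive point is to show $t\mapsto P(t)$ is $C^1$ into $\mathcal{L}(L^2)$ (and in fact into $\mathcal{L}(H^{-k},H^k)$): this follows from the resolvent identity, the uniform bound $\|(z-H(t))^{-1}\|_{L^2\to L^2}\le 2/\delta$ on the contour, and the fact that $t\mapsto V(t,\cdot)$ is $C^1_{\mathrm b}(I;\mathcal{S}(\R^d))$, so that $t\mapsto H(t)$ is $C^1$ as a map into bounded operators $H^2\to L^2$ — differentiating the resolvent gives $\partial_t (z-H(t))^{-1} = (z-H(t))^{-1}(\partial_t V)(t,\cdot)(z-H(t))^{-1}$, which is controlled because $\partial_t V(t,\cdot)\in\mathcal S$ is bounded.

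Next I would construct the normalized eigenfunction itself. Fix a reference time $t_0$ and a unit vector $\chi_0 \in \mathrm{ran}\,P(t_0)$; for $t$ near $t_0$ set $\widetilde\chi(t,\cdot) = P(t)\chi_0$, which is nonzero and $C^1$ in $t$ since $P(t)\to P(t_0)$ in norm and $P(t_0)\chi_0=\chi_0\ne 0$. Then normalize, $\chi(t,\cdot) = \widetilde\chi(t,\cdot)/\|\widetilde\chi(t,\cdot)\|_{L^2}$; the denominator is smooth, bounded away from $0$ locally, so $\chi\in C^1$ into $L^2$ with $\|\chi(t,\cdot)\|_{L^2}=1$. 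To promote this to $H^k$ regularity in $x$, I invoke elliptic regularity: $\chi(t,\cdot)$ satisfies $H(t)\chi(t,\cdot)=E(t)\chi(t,\cdot)$, i.e. $-\frac12\Delta\chi = (E(t)-V(t,\cdot))\chi$; since $V(t,\cdot)\in\mathcal S\subset C^\infty_{\mathrm b}$ and $\chi(t,\cdot)\in L^2$, a bootstrap argument gives $\chi(t,\cdot)\in H^k(\R^d)$ for every $k$, with the $H^k$-norm bounded in terms of $\|V(t,\cdot)\|_{C^{k-2}_{\mathrm b}}$, $|E(t)|$, and $\|\chi(t,\cdot)\|_{L^2}=1$ — all uniform in $t\in I$. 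That the map $t\mapsto \chi(t,\cdot)$ is $C^1$ into $H^k$ (not just $L^2$) follows by differentiating the eigenvalue relation: $\partial_t\chi$ solves $(H(t)-E(t))\partial_t\chi = (\dot E(t)-\partial_t V(t,\cdot))\chi + \langle\text{correction from projection}\rangle$, and the right-hand side lies in $H^k$ with uniformly bounded norm, so applying the partial resolvent $L_E^{-1}(t)$ (bounded $H^{-k}\to H^k$ thanks to the gap $\delta$) yields $\partial_t\chi(t,\cdot)\in H^k$ with a uniform bound. One should also note that the eigenvalue $t\mapsto E(t)$ is itself $C^1_{\mathrm b}$, which comes from $E(t) = \langle H(t)\chi(t,\cdot),\chi(t,\cdot)\rangle_{L^2}$ together with the regularity already established. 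A final remark: the local construction near $t_0$ extends to all of $I$ by continuation along the interval, using at each stage the rank-one smooth projection $P(t)$; the only subtlety is that the eigenfunction is defined only up to a phase $e^{iS(t)}$ as discussed in Remark~\ref{rem:berry}, but any smooth local choice glues to a global $C^1$ choice on the connected interval $I$.

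The main obstacle I anticipate is bookkeeping rather than anything deep: one must carefully track that all constants — the resolvent bound $2/\delta$ on $\Gamma(t)$, the lower bound on $\|P(t)\chi_0\|_{L^2}$, the elliptic-regularity constants, and the bounds on $\partial_t V(t,\cdot)$ in Schwartz seminorms — can be taken uniform over $t\in I$, which is exactly what Assumptions~\ref{ass1} and~\ref{ass2} (the $C^1_{\mathrm b}$ hypothesis on $V$ and the uniform spectral gap $\delta$) are designed to provide. A secondary technical point is justifying differentiation under the contour integral and the interchange of $\partial_t$ with the elliptic bootstrap; both are standard once the uniform resolvent estimates are in place. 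Since the paper explicitly flags (in the Remark following Assumption~\ref{ass1}) that it is not optimizing regularity, I would keep this proof brief, citing standard references such as Kato or Reed–Simon for the analytic perturbation theory of the Riesz projection and for elliptic regularity of eigenfunctions of Schr\"odinger operators.
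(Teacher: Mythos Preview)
Your proposal is correct and shares the overall architecture with the paper's proof: Riesz contour formula for $P(t)$, differentiation of the resolvent via $\partial_t(z-H(t))^{-1}=(z-H(t))^{-1}(\partial_t V)(z-H(t))^{-1}$ to get $P\in C^1_{\rm b}(I;\mathcal L(L^2))$, and elliptic bootstrap on the eigenvalue equation for spatial $H^k$ regularity. The one genuine difference is in how you produce a \emph{global} $C^1$ normalized eigenfunction from the $C^1$ family of rank-one projections. You take the elementary route $\widetilde\chi(t)=P(t)\chi_0$ followed by normalization, which is only guaranteed to work where $\|P(t)\chi_0\|>0$ (i.e.\ where $\|P(t)-P(t_0)\|<1$), and then you patch along the interval. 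The paper instead uses Kato's parallel transport: it sets $K(t)=i[P(t),\dot P(t)]$, solves the linear ODE $\dot A(t)=iK(t)A(t)$, $A(t_0)=\mathbb I$, and defines $\chi(t)=A(t)\chi(t_0)$; the intertwining identity $P(t)=A(t)P(t_0)A^*(t)$ then gives a globally defined, automatically normalized $C^1$ eigenfunction in one stroke, with no patching needed. Your approach is more hands-on and perfectly adequate on an interval (rank-one bundles over $I$ are trivial), while Kato's construction is slicker for globalization and also singles out the parallel-transported section, which is the natural gauge in adiabatic theory. A minor point: in your sketch of $\partial_t\chi\in H^k$, be sure to separate the component of $\partial_t\chi$ along $\chi$ (which is $\langle\partial_t\chi,\chi\rangle\chi$ and is controlled directly) from the orthogonal component (to which $L_E^{-1}(t)$ applies); you gesture at this with ``correction from projection'' but it should be made explicit.
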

\begin{proof}
The proof follows from standard arguments. Indeed, we first notice that, for any fixed $t\in \R$, 
$\chi(t, \cdot)$ satisfies the Schr\"odinger eigenvalue problem 
\[
\left(-\frac{1}{2} \Delta + V(t,x) \right)\chi(t,x) = E(t) \chi(t,x),
\]
which, in view of Assumption \ref{ass1} and \cite[Proposition 1.2]{Hi}, implies the asserted regularity in $H^{k}(\R^d)$ for any $k\ge 0$.
Thus it only remains to prove the differentiability property in time. This follows from the fact that as long as 
$E(t)$ stays separated from the rest of the spectrum, the associated orthogonal projector $P (t)$ can be expressed via Riesz' formula as 
\[
P(t) = \frac{i}{2 \pi} \oint_{\Gamma(t)} (H(t) - z)^{-1} \, dz,
\]
where $\Gamma(t)\subset \C$ is a continuous (positively oriented) curve encircling $E(t)$ once, such that 
\[
\inf_{t\in I} {\rm dist} (E(t), {\rm spec}\big(H(t))\big) = \delta/{2},
\]
i.e., no other points within $\text{spec}(H(t))$ are enclosed by $\Gamma(t)$. Using this, we see that 
\[
\frac{d}{dt} P(t) =  \frac{i}{2  \pi} \oint_{\Gamma(t)} \frac{d}{dt} (H(t) - z)^{-1}   \, dz,
\]
whenever $(H(t) - z)^{-1} \in C^1_{\rm b}(I; \mathcal L(L^2(\R^d)))$. The latter is proved for example in \cite[Lemma 2.4]{Te}. Hence $t\mapsto P(t)$ is in $C_{\rm b}^1(I)$. 

In order to obtain a corresponding eigenfunction $\chi \in C_{\rm b}^1(I; H^{k}(\R^d))$, we can follow the classical idea of Kato \cite{Ka} (see also \cite{Ne1, Ne2}), which starts from the definition of the following operators
\[
K(t):= i [P(t), \dot P(t)], 
\]
and $A(t)$ given by
\[
 \frac{d}{dt} A(t) = i K(t) A(t), \quad A(t_0) = \mathbb I.
\]
Then it is easily seen that $K(t)$ is self-adjoint and thus $A^*(t) = A^{-1}(t)$. In addition, one checks (after some calculations invoking the properties of projections)
\[
\frac{d}{dt} \left(A^*(t) P(t) A(t) \right)= 0,
\]
so that the intertwining property holds:
\begin{equation}\label{inter}
P(t) = A(t) P(t_0) A^*(t), \quad \forall t \in I.
\end{equation}
We now choose $\chi(t_0)$ normalized such that $P(t_0) \chi(t_0, \cdot) = \chi(t_0, \cdot)$ and define $\chi(t,\cdot) = A(t) \chi(t_0, \cdot)$. In view of \eqref{inter} this implies that 
$P(t) \chi(t, \cdot) = \chi(t, \cdot)$ for all $t\in I$. In addition $\| \chi(t,\cdot)\|_{L^2}=1$ and since $A\in C^1_{\rm b}(I; \mathcal L(L^2(\R^d)))$ we also get $\chi(t,\cdot) \in C^1_{\rm b}(I; H^{k}(\R^d))$.
\end{proof}

By Sobolev imbedding we also have $H^{k}(\R^d) \hookrightarrow L^{\infty}(\R^d)$, provided $k> \frac{d}{2}$. In particular, 
we have $\chi\in L^q(\R^d)$ for any $q\in [2,\infty]$ and thus the expression for the nonlinear phase modulation $\theta(t)$ given by \eqref{eq:nlphase} is well-defined.
Moreover, for $k> \frac{d}{2}$, the Sobolev space $H^{k}(\R^d)$ is in fact an algebra, i.e., 
if $f,g\in H^{k}(\R^d)$ then $fg\in H^{k}(\R^d)$. This can be used to prove the following regularity result:

\begin{lemma}\label{lem:regular}
Let $\sigma \in \N$, $\lambda \in \R$, and Assumptions \ref{ass1} and \ref{ass2} hold. 
Then the expressions appearing in the asymptotic expansion \eqref{eq:U} satisfy 
$U_n\in C^1_{\rm b}(I; H^{k}(\R^d))$ for all $n \in \N$ and $k\ge 0$.
\end{lemma}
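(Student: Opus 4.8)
The plan is to prove the claim by induction on $n$, since the amplitudes $U_n$ are defined recursively: $U_0$ is obtained by solving the leading-order eigenvalue problem, and each $U_n$ for $n\ge 1$ is obtained by inverting $L_E(t)$ on the orthogonal complement of $\ker L_E(t)$ and solving an ODE for the scalar coefficient $u_n(t)$. The key structural fact is the decomposition \eqref{eq:Un}, $U_n = u_n(t)\chi(t,\cdot) + v_n(t,\cdot)$, with $v_n \perp \chi$ and $v_n = i L_E^{-1}(t)(\text{stuff built from }U_0,\dots,U_{n-1})$.

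For the base case, Lemma \ref{chiregularity} already gives $\chi\in C^1_{\rm b}(I;H^k(\R^d))$ for every $k\ge 0$, and the nonlinear correction \eqref{eq:nlamp} merely multiplies $\chi$ by the scalar factor $e^{-\beta(t)-i\theta(t)}$; I would check that $\beta$ and $\theta$ are $C^1_{\rm b}$ in time — $\beta$ from \eqref{eq:berry} since $\langle\partial_t\chi,\chi\rangle_{L^2}$ is continuous and bounded by Lemma \ref{chiregularity}, and $\theta$ from \eqref{eq:nlphase} since $t\mapsto\|\chi(t,\cdot)\|_{L^{2\sigma+2}}^{2\sigma+2}$ is $C^1_{\rm b}$ (using the Sobolev embedding $H^k\hookrightarrow L^q$ for $k>d/2$ and the chain rule). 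Hence $U_0\in C^1_{\rm b}(I;H^k)$ for all $k\ge 0$. For the inductive step, assuming $U_0,\dots,U_{n-1}\in C^1_{\rm b}(I;H^k)$ for all $k\ge 0$, the right-hand side of the equation defining $U_n$ is a finite sum of terms of the form $i\partial_t U_{n-1}$ together with polynomial nonlinear expressions in $U_0,\dots,U_{n-1}$ and their conjugates (coming from the Taylor-type expansion of $\e|\mathcal U^\e|^{2\sigma}\mathcal U^\e$). Since $H^k(\R^d)$ is a Banach algebra for $k>d/2$, and taking complex conjugates preserves $H^k$-norms, these products lie in $C^1_{\rm b}(I;H^k)$; differentiability in $t$ of the products follows from the Leibniz rule together with the algebra property applied to the $t$-derivatives.

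The only genuine point to verify is that the partial inverse $L_E^{-1}(t)$ defined in \eqref{eq:resol} maps $C^1_{\rm b}(I;H^k)$ into itself. For fixed $t$, $(H(t)-E(t))$ is invertible from $H^{k+2}\cap(1-P(t))L^2$ onto $H^k\cap(1-P(t))L^2$ because $E(t)$ is isolated with gap $\delta$; elliptic regularity for $-\tfrac12\Delta + V(t,\cdot) - E(t)$ (using $V(t,\cdot)\in\mathcal S(\R^d)$, Assumption \ref{ass1}) shows the inverse is in fact bounded $H^k\to H^{k+2}$ uniformly in $t\in I$, so $L_E^{-1}(t)$ is a bounded operator on $H^k$ with a bound depending only on $\delta$ and on $\sup_{t\in I}\|V(t,\cdot)\|_{W^{k,\infty}}$. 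The $C^1$-in-$t$ dependence of $L_E^{-1}(t)$ then follows from the Riesz-formula argument already used for $P(t)$ in the proof of Lemma \ref{chiregularity} together with the $C^1_{\rm b}(I)$ dependence of $(H(t)-z)^{-1}$ cited there (e.g.\ \cite[Lemma 2.4]{Te}), combined with $E\in C^1_{\rm b}(I)$ (which follows since $E(t)=\langle\chi(t,\cdot),H(t)\chi(t,\cdot)\rangle_{L^2}$). Feeding the $C^1_{\rm b}(I;H^k)$ right-hand side through $L_E^{-1}(t)$ gives $v_n\in C^1_{\rm b}(I;H^k)$, and the ODE for $u_n(t)$ — whose coefficients are $\beta(t)$ and inner products of $C^1_{\rm b}$ quantities, hence $C^1_{\rm b}(I)$ — produces $u_n\in C^1_{\rm b}(I)$ (in fact $C^2$) by the variation-of-constants formula, closing the induction. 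The main obstacle, such as it is, is the uniform-in-$t$ control of $L_E^{-1}(t)$ on $H^k$; everything else is bookkeeping with the algebra property.
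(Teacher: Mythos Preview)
Your proposal is correct and follows essentially the same approach as the paper: induction on $n$ using the decomposition $U_n=u_n\chi+v_n$, the algebra property of $H^k$ for $k>d/2$ to handle the polynomial nonlinear terms, and the mapping properties of $L_E^{-1}(t)$ to recover $v_n$. If anything, you are more careful than the paper about two points---the elliptic-regularity bootstrap $L_E^{-1}(t):H^k\to H^{k+2}$ (the paper only states $L^2\to H^2$) and the $C^1_{\rm b}$-in-$t$ dependence of $L_E^{-1}(t)$---both of which are needed for the full $C^1_{\rm b}(I;H^k)$ conclusion.
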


\begin{proof}
Each $U_n$ is of the form given in \eqref{eq:Un}, i.e., $U_n (t,x)= u_n(t)\chi(t,x) + v_n(t,x)$, with $v_0 \equiv 0$. 
In view of the asymptotic expansion above, we know that each $u_n(t)$ solves an ordinary differential equation of the form
\[
\dot u_n  + \beta(t) u_n + \langle \partial_t v_n(t,\cdot), \chi (t,\cdot) \rangle_{L^2}  = 
i \lambda \left\langle \frac{d^{n}}{ds^{n}}F\left(U_0 + \sum_{\ell=1}^{n} s^\ell U_\ell \right) \Big|_{s=0}, \chi(t, \cdot)\right\rangle_{L^2} ,
\]
where we denote the nonlinearity by $F(z)=|z|^{2\sigma} z$, which for $\sigma \in \N$ is smooth. 
Note that due to the orthogonality of $\chi$ with every $v_\ell$, the right hand side in fact only involves $u_\ell$'s. 
Moreover, we see that for $n\ge 1$, these differential equations are indeed linear. Together with the fact  
that $\frac{d}{dt}|u_0(t,x)|^2=0$ we infer that there is no restriction on the existence time of $u_n(t)$.
In view with Lemma \ref{chiregularity} we thus have $u_n\chi \in C_{\rm b}^1(I; H^{k}(\R^d))$ for all $k\ge0$.

On the other hand, we know that all $v_n$, for $n\ge 1$, are determined by inverting an elliptic equation for any fixed $t\in I$, i.e.,
\begin{equation}\label{eq:v}
v_{n}(t,x) = L_E^{-1}(t) \left(i \partial_t U_{n-1}(t,x) + \lambda \frac{d^{n-1}}{ds^{n-1}}F\left(U_0 + \sum_{\ell=1}^{n-1} s^\ell U_\ell \right) \Big|_{s=0} \right) .
\end{equation}
The fact that $H^{k}(\R^d)$ for $k>\frac{d}{2}$ forms an algebra implies that the right hand side, which is 
a sum of products of $U_\ell$'s, is in $H^{k}(\R^d)$ for all $k\ge 0$ (see also the proof of Proposition \ref{prop:asympt} below).
Since $L_{E}^{-1}(t): L^{2}(\R^d)\to \text{dom}(H(t))=H^2(\R^d)$, in view of \eqref{eq:resol}, the assertion follows by induction over $n$.
\end{proof}

With this result in hand, we set
\begin{equation}\label{psiN}
\psi^\e_{N} (t,x) := e^{- i \varphi(t)/\eps} \sum_{n=0}^N \e^n U_n(t,x),
\end{equation}
where $\varphi(t)$ is the dynamic phase given by \eqref{eq:phi} and $N\in \N$. Note that at $t=t_0$ the $U_n$ can in general not be chosen arbitrarily, since parts of it need to be determined recursively 
as given in \eqref{eq:v}. In particular, we have
\[
\psi^\e_{N} (t_0,x) = \chi(t_0,x) +\eps \gamma^\e(x)
\]
where, due to the regularity of $v_n$, the corrector $\gamma^\e\in H^{k}(\R^d)$ of Theorem \ref{thm:main} is of the form
\begin{equation}\label{eq:corrector}
\gamma^\e(x) = \sum_{n=1}^N \eps^{n-1}  v_n(t_0,x),
\end{equation}
with $v_n(t_0,x)$ as above. This definition of $\psi^\e_N$ then yields an {\it approximate solution} of the nonlinear Schr\"odinger equation \eqref{eq:nls} in the following sense:

\begin{proposition} \label{prop:asympt}
Let $\sigma \in \N$, $\lambda \in \R$, and Assumptions \ref{ass1} and \ref{ass2} hold.
Then $\psi^\e_N$ defined by \eqref{psiN} satisfies $\psi_N\in C_{\rm b}^1(I; H^{k}(\R^d))$ for all $k\ge 0$ and
\[
i \eps \partial_t \psi_N^\e + H(t) \psi_N^\e + \lambda \e |\psi_N^\e|^{2\sigma} \psi_N^\e = r_N^\e(t,x),
\]
where the remainder is bounded by
\[
\sup_{t\in I} \| r_N^\e (t, \cdot) \|_{H^{k}(\R^d))}\le C \e^{N  +1}.
\]
\end{proposition}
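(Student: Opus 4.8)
The plan is to substitute the ansatz \eqref{psiN} into the nonlinear Schr\"odinger equation \eqref{eq:nls} and to check that, by the very construction of the coefficients $U_n$ in Section \ref{sec:constr}, all contributions of order $\e^0,\e^1,\dots,\e^N$ cancel, so that the residual $r_N^\e$ collects only terms carrying a factor $\e^{N+1}$ or higher. The first point to record is the regularity statement: for each fixed $\e>0$ the scalar factor $t\mapsto e^{-i\varphi(t)/\e}$ lies in $C^1_{\rm b}(I)$ (since $\dot\varphi=E$ is continuous and bounded on $I$ by Assumption \ref{ass2} together with $V\in C^1_{\rm b}(I;\mathcal S(\R^d))$), while each $U_n\in C^1_{\rm b}(I;H^k(\R^d))$ by Lemma \ref{lem:regular}; hence $\psi_N^\e\in C^1_{\rm b}(I;H^k(\R^d))$ for all $k\ge0$. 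A fact used repeatedly below is that, since $\varphi(t)$ is real-valued and independent of $x$, multiplication by $e^{-i\varphi(t)/\e}$ is an isometry of every $H^k(\R^d)$.

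For the residual itself, write $\psi_N^\e=e^{-i\varphi/\e}\mathcal U_N^\e$ with $\mathcal U_N^\e=\sum_{n=0}^N\e^n U_n$. Using $\dot\varphi=E$ and $L_E(t)=H(t)-E(t)$, a direct computation gives
\[
i\e\partial_t\psi_N^\e-H(t)\psi_N^\e=e^{-i\varphi/\e}\Big(-\sum_{n=0}^N\e^n L_E(t)U_n+i\sum_{n=0}^N\e^{n+1}\partial_t U_n\Big).
\]
Now $L_E(t)U_0=0$ by \eqref{eq:EVproblem}, while for $n\ge1$ the coefficient $U_n$ was constructed to satisfy $L_E(t)U_n=i\partial_t U_{n-1}+(\text{the order-}\e^{n-1}\text{ term of }\lambda F(\mathcal U^\e))$, cf.\ \eqref{eq:nlstep2}, \eqref{eq:v} and the proof of Lemma \ref{lem:regular}, with $F(z)=|z|^{2\sigma}z$. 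Substituting, the telescoping sum $-i\sum_{n=1}^N\e^n\partial_t U_{n-1}+i\sum_{n=0}^N\e^{n+1}\partial_t U_n$ collapses to $i\e^{N+1}\partial_t U_N$, and there remains, with the opposite sign, the sum of the nonlinear source terms of orders $\e^1,\dots,\e^N$. On the other hand, since $|e^{-i\varphi/\e}|=1$ one has $\lambda\e|\psi_N^\e|^{2\sigma}\psi_N^\e=\lambda\e\,e^{-i\varphi/\e}F(\mathcal U_N^\e)$, and because $\sigma\in\N$ the map $s\mapsto F(\sum_{n=0}^N s^nU_n)$ is a polynomial, so $F(\mathcal U_N^\e)=\sum_{j=0}^{(2\sigma+1)N}\e^j c_j$ with $c_j$ the $j$-th Taylor coefficient, depending only on $U_0,\dots,U_j$ and on finitely many products thereof. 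By construction the coefficients $c_0,\dots,c_{N-1}$ are exactly the nonlinear sources used to define $U_1,\dots,U_N$, so the terms of order $\e^1,\dots,\e^N$ in $\lambda\e F(\mathcal U_N^\e)$ cancel against those produced above, leaving a residual of the form
\[
r_N^\e(t,x)=e^{-i\varphi(t)/\e}\Big(i\e^{N+1}\partial_t U_N(t,x)-\lambda\!\!\sum_{j=N}^{(2\sigma+1)N}\!\!\e^{j+1}c_j(t,x)\Big).
\]

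It remains to bound this uniformly on $I$. Each $c_j$ is a finite sum of products of the functions $U_\ell$, $\ell\le j$; since $H^k(\R^d)$ is an algebra for $k>\frac d2$ and $U_\ell\in C^1_{\rm b}(I;H^k)$ by Lemma \ref{lem:regular}, we get $c_j\in C^0_{\rm b}(I;H^k(\R^d))$ for every $k>\frac d2$, with $\sup_{t\in I}\|c_j(t,\cdot)\|_{H^k}$ controlled by the quantities $\sup_{t\in I}\|U_\ell(t,\cdot)\|_{H^k}$, $\ell\le j$; the range $0\le k\le\frac d2$ then follows from the embedding $H^{k'}(\R^d)\hookrightarrow H^k(\R^d)$ with $k'>\frac d2$. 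Using that $e^{-i\varphi/\e}$ preserves $\|\cdot\|_{H^k}$, and that $r_N^\e$ is a sum of an $\e$-independent number of terms each carrying a power $\e^m$ with $m\ge N+1$, restricting to $\e\le1$ gives $\sup_{t\in I}\|r_N^\e(t,\cdot)\|_{H^k}\le C\e^{N+1}$, with $C$ depending on $N,k,\sigma,\lambda$ and on the norms $\sup_{t\in I}\|U_n(t,\cdot)\|_{H^k}$, but not on $\e$.

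The only genuinely delicate step is the bookkeeping in the second paragraph: one must verify that the Taylor coefficients $c_j$ obtained by expanding $F(\mathcal U_N^\e)=|\mathcal U_N^\e|^{2\sigma}\mathcal U_N^\e$ in powers of $\e$ coincide, order by order and including all combinatorial constants, with the nonlinear sources used in Section \ref{sec:constr} to define the $U_n$. This amounts to the multilinear Leibniz expansion of $z^{\sigma+1}\bar z^\sigma$ at $z=\sum\e^nU_n$ and is routine but somewhat tedious. Everything else --- the isometry property of the phase factor, the $H^k$ algebra estimates, and the uniform-in-$I$ bounds --- is soft, and is already available from Lemmas \ref{chiregularity}--\ref{lem:regular}.
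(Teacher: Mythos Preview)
Your proof is correct and follows essentially the same route as the paper: substitute $\psi_N^\e=e^{-i\varphi/\e}\mathcal U_N^\e$ into the equation, use the recursive definition of the $U_n$ to cancel all terms of order $\le N$, and bound the leftover products via the algebra property of $H^k$ for $k>\tfrac d2$ together with Lemma~\ref{lem:regular}. Your treatment is in fact more explicit than the paper's, which simply asserts the form of $r_N^\e$ and invokes the algebra property; your added remarks on the telescoping cancellation, the isometry of the phase factor on $H^k$, and the extension to $0\le k\le\tfrac d2$ via Sobolev embedding are all correct and only make the argument cleaner.
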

\begin{proof}
By plugging $\psi^\e_N$ into the nonlinear Schr\"odinger equations, the asymptotic expansion above shows that 
\[
r^\e_N(t,x) =   \e^{N+1} e^{- i \varphi(t)}\left( i \partial_t U_N(t,x) + \lambda \widetilde {r}^\e_N  (t,x)\right) ,
\]
where
\[
\widetilde {r}^\e_N= \sum_{j=N}^{(2\sigma+1)N}\e^{j-N}\sum_{\ell_1+\dots+ \ell_\sigma+m_1+\dots +m_\sigma+r=j} 
U_{\ell_1}\dots U_{\ell_\sigma}  \bar U_{m_1}\dots \bar U_{m_\sigma}U_r.
\]
In view of the regularity result established in Lemma \ref{lem:regular}, and the algebra property of $H^{k+2}(\R^d)$, for $k>\frac{d-4}{2}$, 
we directly obtain the estimate on the remainder stated above.
\end{proof}

This result, however, is not sufficient to conclude that the exact solution $\psi^\e$ will stay close to the approximate solution $\psi^\e_N$ for times of order $O(1)$. 
We shall show in the next section that this is indeed the case.


\section{Nonlinear stability of the approximation}\label{sec:stab}

\subsection{Preliminaries} 
Before we can prove stability of our asymptotic expansion, we need a basic (local in-time) 
existence result for solutions to nonlinear Schr\"odinger equations of the form \eqref{eq:nls}.

\begin{lemma}\label{lem:exist}
Let $\sigma \in \N$, $\lambda \in \R$, and $\N \ni k>\frac{d}{2}$. Moreover, let $\psi^\e_{\rm in} \in H^k(\R^d)$ and the potential satisfy $V\in C^1_{\rm b}
(I; \mathcal S(\R^d))$. 
Then there exist $T^\e_1, T^\e_2>0$, and a unique solution $\psi^\e \in C\big([t_0-T^\e_1,t_0+T^\e_2]; H^k(\R^d)\big)$ to \eqref{eq:nls}. Furthermore
\[ 
\| \psi^\e(t, \cdot) \|_{L^2(\R^d)} = \| \psi^\e_{\rm in} (t_0,\cdot)\|_{L^2(\R^d)}, \quad \forall t\in [t_0-T^\e_1,t_0+T^\e_2]\subset I. 
\]
\end{lemma}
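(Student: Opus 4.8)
The plan is to prove Lemma \ref{lem:exist} by a standard fixed-point argument in $C([t_0-T_1^\e, t_0+T_2^\e]; H^k(\R^d))$ combined with the conservation of the $L^2$-norm. First I would rewrite \eqref{eq:nls} in Duhamel (mild) form. Dividing by $\e$, the equation reads $i\partial_t \psi^\e = \frac1\e H(t)\psi^\e + \lambda |\psi^\e|^{2\sigma}\psi^\e$. Since $H(t) = -\frac12\Delta + V(t,x)$ is, for each fixed $t$, a self-adjoint operator on $L^2$ with domain $H^2$ and (by the Kato–Rellich type argument already invoked via \cite[Proposition 1.2]{Hi} and the assumption $V\in C^1_{\rm b}(I;\mathcal S)$) generates a strongly continuous family of unitary propagators $\mathcal W^\e(t,s)$ on $L^2$ solving $i\e\partial_t \mathcal W^\e(t,s) = H(t)\mathcal W^\e(t,s)$, $\mathcal W^\e(s,s)=\mathrm{Id}$; moreover, because $V(t,\cdot)\in\mathcal S(\R^d)$ uniformly in $t$ with $C^1$ dependence, these propagators also act boundedly on $H^k(\R^d)$ for every $k\ge 0$, with norms bounded on compact time-intervals (this is a standard consequence of commuting $\partial^\alpha$ past the equation and using that all derivatives of $V$ are bounded). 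The mild formulation is then
\[
\psi^\e(t) = \mathcal W^\e(t,t_0)\psi^\e_{\rm in} - i\lambda\int_{t_0}^t \mathcal W^\e(t,s)\,\big(|\psi^\e(s)|^{2\sigma}\psi^\e(s)\big)\,ds.
\]

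Next I would run the contraction mapping argument on the ball $B_R = \{\phi \in C([t_0-T, t_0+T]; H^k) : \sup_t\|\phi(t)\|_{H^k}\le R\}$ with $R = 2\|\psi^\e_{\rm in}\|_{H^k}$ and $T>0$ to be chosen small. The only nontrivial estimate is that $\phi\mapsto |\phi|^{2\sigma}\phi$ maps $H^k$ to $H^k$ and is locally Lipschitz there: this is exactly where one uses that $k>\frac d2$ makes $H^k(\R^d)$ a Banach algebra (stated in the excerpt just before Lemma \ref{lem:regular}), so that $\||\phi|^{2\sigma}\phi\|_{H^k}\le C_k\|\phi\|_{H^k}^{2\sigma+1}$ and, for $\phi_1,\phi_2\in B_R$, $\||\phi_1|^{2\sigma}\phi_1 - |\phi_2|^{2\sigma}\phi_2\|_{H^k}\le C_k R^{2\sigma}\|\phi_1-\phi_2\|_{H^k}$ (here one uses that $z\mapsto|z|^{2\sigma}z$ is a polynomial in $z,\bar z$ for $\sigma\in\N$, so there is no issue with fractional smoothness of the nonlinearity). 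Combining this with the boundedness of $\mathcal W^\e(t,s)$ on $H^k$ over the compact interval, the Duhamel map is a contraction on $B_R$ provided $|\lambda|\,C_k\,(\sup\|\mathcal W^\e\|_{H^k\to H^k})\,R^{2\sigma}\,T < \frac12$; this fixes $T_1^\e, T_2^\e$ (note these depend on $\e$ only through the propagator bounds, and on $\|\psi^\e_{\rm in}\|_{H^k}$). Uniqueness on the existence interval follows from the same Lipschitz estimate via Gronwall.

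Finally, for the $L^2$-conservation: since $k>\frac d2\ge 0$ we have $\psi^\e\in C([t_0-T_1^\e,t_0+T_2^\e];H^k)\subset C(\cdot;L^2)$, and in fact $\psi^\e\in C^1(\cdot;H^{k-2})$ from the equation, so one may compute $\frac{d}{dt}\|\psi^\e(t)\|_{L^2}^2 = 2\,\re\langle \partial_t\psi^\e,\psi^\e\rangle_{L^2} = 2\,\re\big\langle -\tfrac{i}{\e}H(t)\psi^\e - i\lambda|\psi^\e|^{2\sigma}\psi^\e,\ \psi^\e\big\rangle_{L^2}$. Since $H(t)$ is self-adjoint, $\langle iH(t)\psi^\e,\psi^\e\rangle$ is purely imaginary, and $\langle i|\psi^\e|^{2\sigma}\psi^\e,\psi^\e\rangle = i\||\psi^\e|^{\sigma+1}\|_{L^2}^2$ is purely imaginary as well; hence $\frac{d}{dt}\|\psi^\e(t)\|_{L^2}^2 = 0$, giving the stated conservation. (If one wants to avoid the $H^{k-2}$ regularity of $\partial_t\psi^\e$, the same identity can be obtained by approximation, or directly from the unitarity of $\mathcal W^\e$ on $L^2$ combined with a fixed-point argument at the $L^2$-level applied to the already-constructed $H^k$ solution.) I expect the main obstacle to be purely bookkeeping: carefully justifying that the propagator family $\mathcal W^\e(t,s)$ generated by the time-dependent $H(t)$ is well defined and uniformly bounded on $H^k(\R^d)$ on compact time intervals — but under Assumption \ref{ass1} this is classical (it is essentially \cite[Lemma 2.4]{Te} together with the algebra/commutator estimates for $V\in C^1_{\rm b}(I;\mathcal S)$), so no genuinely new difficulty arises.
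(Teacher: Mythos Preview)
Your argument is correct, but it differs from the paper's in one concrete choice: you absorb the full linear operator $H(t)$ into a time-dependent propagator $\mathcal W^\e(t,s)$ and put only the nonlinearity into the Duhamel integral, whereas the paper uses the \emph{free} Schr\"odinger group $e^{-it\Delta/(2\eps)}$ and treats both $V(t,\cdot)\psi^\e$ and the nonlinearity as inhomogeneous terms. The paper's route is slightly cheaper: $e^{-it\Delta/(2\eps)}$ is manifestly an isometry on every $H^k$, so no construction or $H^k$-boundedness of a time-dependent propagator is needed; one simply estimates $\|V\psi^\e\|_{H^k}\le C\|V\|_{C^k_{\rm b}}\|\psi^\e\|_{H^k}$ and absorbs the resulting $1/\eps$ into the constant $C^\e$ governing the contraction time. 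Your route is equally valid and has the conceptual advantage that the linear evolution is handled exactly (only the genuinely nonlinear term sits in the integral), but it obliges you to verify the $H^k$-boundedness of $\mathcal W^\e(t,s)$, which---as you correctly note---is standard bookkeeping under Assumption~\ref{ass1}. Either way the $\eps$-dependence of $T^\e_{1,2}$ is unavoidable at this stage (via $1/\eps$ in the paper's Duhamel term, or via the $H^k$-bounds on $\mathcal W^\e$ in yours), and both proofs close with the same algebra property of $H^k$ for $k>d/2$ and the same $L^2$-conservation argument.
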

\begin{proof} The proof is a straightforward extension of the one given in, e.g., \cite[Proposition 3.8]{Tao} for the case without potential.
We rewrite the NLS using Duhamel's principle
\[
\psi^\e(t,\cdot)  = e^{- i t \frac{\Delta}{2\eps }} \psi_{\rm in}^\e - i \int_{t_0}^t  e^{i (s-t)\frac{\Delta}{2\eps}} 
\left(\lambda |\psi^\e(s,\cdot)|^{2\sigma}+ \frac{1}{\eps} V(s,\cdot)\right)\psi^\e (s,\cdot) \, ds
=: \Xi(\psi^\e)(t).
\]
Clearly, the free Schr\"odinger group $e^{- i t \frac{\Delta}{2\eps }}$ is an isometry on $H^k(\R^d)$ for any $k\in \R$, and our assumptions on $V$ guarantee that 
there is a constant $C=C(k,d)>0$ such that
\[
\| V \psi^\e \|_{H^k} \le \sum_{|\alpha|\le k} \| \partial^\alpha V\|_{L^\infty} \, \| \psi^\e\|_{H^{k-\alpha}}\le C \, \| V \|_{C^k_{\rm b}}\, \| \psi^\e\|_{H^k}<\infty.
\]
Moreover, for $\sigma \in \N$, the nonlinearity 
$F(z)=|z|^{2\sigma}z$ is smooth which, together with the fact that $H^k(\R^d)$ for $k>\frac{d}{2}$ 
forms an algebra, allows us to estimate
\begin{equation}\label{eq:est}
\| \psi^\e(t,\cdot) \|_{H^k} \le \| \psi_{\rm in}^\e \|_{H^k} +   C^\e \int_{t_0}^t  \|\psi^\e(s,\cdot)\|^{2\sigma+1}_{H^k} +  \| \psi^\e (s,\cdot)\|_{H^k} \, ds,
\end{equation}
where $C^\e=  C(k,d,\lambda, V, \eps)>0$. 

Now denote by $X:=C(([t_0-T^\e_1,t_0+T^\e_2];
H^k(\R^d))$ for some $T^\e_1,T^\e_2>0$ to be chosen later on and $s>\frac{d}{2}$. Further, let $R>1$ be such that $\| \psi^\e_{\rm in} \|_{H^k}\le R$. 
Then, we can show that the $u\mapsto \Xi(u)$ maps the ball $B_{2R}(0)\subset X$ into itself. Indeed, the estimate \eqref{eq:est} implies
\begin{align*}
\| \Xi(u) \|_X \le &\ \| \psi_{\rm in}^\e \|_{H^k} + C^\e \max(T^\e_1,T^\e_2) \big( \|u\|^{2\sigma+1}_{X} +  \| u\|_{X} \big)\\
\le &\ R +C^\e \max(T^\e_1,T^\e_2) \big((2R)^{2\sigma+1} + 2R\big)\\
\le & \, R +  2^{2\sigma+2} C^\e \max(T^\e_1,T^\e_2) R^{2\sigma +1}. 
\end{align*}
Hence, we can choose $T_1^\e, T^\e_2\le \frac{R^{-2\sigma} }{2^{2\sigma+2} C^\e}$ and such that $[t_0-T^\e_1, t_0+T^\e_2]\subset I$. 
The same type of estimate shows that $u\mapsto \Xi(u)$ is a contraction on $B_{2R}(0)\subset X$ 
and hence there exists a unique fixed point $u = \psi^\e\in X$. The conservation of the $L^2$-norm of the solution then 
follows from the fact that $H(t)$ is self-adjoint and the nonlinearity is gauge invariant. 
\end{proof}
\begin{remark}
By carefully tracking the $\eps$-dependence of $T_{1,2}^\eps$, one finds that, in general, $T_{1,2}^\eps$ will go to zero, as $\eps \to 0$. 
However, the stability proof below actually shows that for our choice of initial data, one can find $T_{1,2}^\e > 0$ independent of $\eps$.
\end{remark}

We will also need the following Moser type lemma, proved in, e.g., \cite{Ra}.

\begin{lemma}\label{lem:moser}
Let $R>0$, $s\in \N$ and $F(z)=|z|^{2\sigma}z$, with $\sigma\in \N$. Then there exists
$K=K(R,s,\sigma)$ such that if $w$ satisfies 
\begin{equation*}
\left\| \partial^\beta w\right\|_{L^\infty(\R^d)} \le R,
\quad | \beta| \le s ,
\end{equation*}
and $\eta$ satisfies $\displaystyle \left\| \eta
\right\|_{L^\infty(\R^d)} \le 
R$, then 
\begin{equation*}
 \sum_{|\beta|\le s}\left\| \partial^\beta   \big(F(w+\eta)
- F(w)\big)\right\|_{L^2(\R^d)} 
\le K \sum_{| \beta| \le s}\left\| \partial^\beta\eta\right\|_{L^2(\R^d)} . 
\end{equation*}
\end{lemma}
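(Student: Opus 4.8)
The plan is to reduce the composition estimate to a finite sum of products of derivatives and to control each such product by a Gagliardo--Nirenberg interpolation inequality in which only a single factor carries an $H^s$-norm while all the others carry $L^\infty$-norms; this is exactly what makes it sufficient to assume only $\|\eta\|_{L^\infty}\le R$, with no control on derivatives of $\eta$. First I would use that, since $\sigma\in\N$, the map $F(z)=|z|^{2\sigma}z=z^{\sigma+1}\bar z^{\sigma}$ is a polynomial in $(z,\bar z)$. Expanding $F(w+\eta)-F(w)$ and cancelling the terms that contain neither $\eta$ nor $\bar\eta$ yields a finite sum
\[
F(w+\eta)-F(w)=\sum_{\mu} c_\mu\, w^{a_\mu}\,\bar w^{\,b_\mu}\,\eta^{c_\mu}\,\bar\eta^{\,d_\mu},
\]
with $a_\mu+b_\mu+c_\mu+d_\mu=2\sigma+1$ and $q_\mu:=c_\mu+d_\mu\ge 1$ for every $\mu$.

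Next, I would fix a multi-index $\beta$ with $|\beta|\le s$ and a single monomial, and apply Leibniz' rule: $\partial^\beta$ of the monomial is a finite sum of products of the form $\prod_{j}\partial^{\gamma_j}g_j$ with $\sum_j\gamma_j=\beta$ and each $g_j\in\{w,\bar w,\eta,\bar\eta\}$. In each such product, the $a_\mu+b_\mu$ factors that are copies of $w$ or $\bar w$ go into $L^\infty$: by hypothesis $\|\partial^{\gamma_j}w\|_{L^\infty}\le R$ whenever $|\gamma_j|\le s$, so their contribution is bounded by $R^{a_\mu+b_\mu}$. It then remains to estimate, in $L^2$, the product of the $q_\mu\ge 1$ remaining factors, all copies of $\eta$ or $\bar\eta$, whose derivative orders sum to at most $|\beta|\le s$. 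Here I would invoke the tame product bound: for any $q\ge 1$ and any multi-indices with $|\delta_1|+\dots+|\delta_q|=r'\le s$,
\[
\big\|\,\partial^{\delta_1}\eta\cdots\partial^{\delta_q}\eta\,\big\|_{L^2}\le C(s,\sigma)\,\|\eta\|_{L^\infty}^{\,q-1}\,\|\eta\|_{H^s}\le C(s,\sigma)\,R^{\,q-1}\,\|\eta\|_{H^s}.
\]
The proof of this bound separates off the undifferentiated copies of $\eta$ (estimated in $L^\infty$), applies Hölder with exponents $2r'/|\delta_j|$ to the differentiated ones, and uses the Gagliardo--Nirenberg inequality $\|\partial^{\delta}\eta\|_{L^{2r'/|\delta|}}\le C\,\|\eta\|_{L^\infty}^{1-|\delta|/r'}\,\|\nabla^{r'}\eta\|_{L^2}^{|\delta|/r'}$; the $L^\infty$-powers multiply to $\|\eta\|_{L^\infty}^{q-1}$ and the $\dot H^{r'}$-powers to $\|\nabla^{r'}\eta\|_{L^2}\le\|\eta\|_{H^s}$, while the degenerate case $r'=0$ is just $\|\eta^{q}\|_{L^2}\le\|\eta\|_{L^\infty}^{q-1}\|\eta\|_{L^2}$. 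The crucial point is that, since all these factors are copies of the same function $\eta$, the lone factor that ends up carrying an $H^s$-norm is always an $\eta$, so the whole product is linear in $\|\eta\|_{H^s}$.

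Finally, combining the two bounds shows that each Leibniz term is $\le C(R,s,\sigma)\,\|\eta\|_{H^s}$; summing over the finitely many Leibniz terms, over the finitely many monomials $\mu$, and over $|\beta|\le s$, and using $\|\eta\|_{H^s}\le\sum_{|\beta|\le s}\|\partial^\beta\eta\|_{L^2}$, gives the claim with a constant $K=K(R,s,\sigma)$. The only genuinely non-trivial step is the Gagliardo--Nirenberg estimate above --- distributing up to $s$ derivatives among several copies of $\eta$ while only an $L^\infty$ bound on $\eta$ is available --- and everything else is combinatorial bookkeeping; a detailed proof can be found in \cite{Ra}.
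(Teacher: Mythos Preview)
The paper does not actually prove this lemma: it is stated with a reference to \cite{Ra} and the remark that there it is proved for $\varepsilon$-scaled derivatives, so there is no ``paper's own proof'' to compare against. Your argument --- expanding $F(w+\eta)-F(w)$ as a polynomial in $(w,\bar w,\eta,\bar\eta)$ (using $\sigma\in\N$), applying Leibniz' rule, putting all $w$-factors into $L^\infty$ via the hypothesis $\|\partial^\beta w\|_{L^\infty}\le R$, and then handling the product of $\eta$-factors by the Gagliardo--Nirenberg tame product estimate so that only one $H^s$-norm of $\eta$ survives --- is the standard proof of such Moser/Schauder lemmas and is correct; it is essentially the argument one finds in \cite{Ra}.
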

In \cite{Ra}, this lemma was proved for $\e$-scaled derivatives. For our purposes, we can set $\e =1$ but note that the 
estimate above is linear in the $H^s$ norm of $\eta$, which subsequently allows the use of Grownwall's lemma (see the proof of Proposition \ref{prop:stab} below).

\subsection{Nonlinear stability} We are now in the position to prove the desired stability result for 
the asymptotic expansion obtained above.

\begin{proposition}\label{prop:stab}
Let $\sigma \in \N$, $\lambda \in \R$, $\N\ni k>\frac{d}{2}$, and Assumptions \ref{ass1} and \ref{ass2} hold. 
Given an approximate solution $\psi_N^\e$ of the form \eqref{psiN} with $N>k$, 
we assume that, at $t=t_0$, the initial data $\psi^\e_{\rm in} \in H^k(\R^d)$ is such that
\[
\| \psi^\e_{\rm in}  - \psi^\e_{N-1} (t_0,\cdot) \|_{H^k(\R^d)} \le C_0 \e^{N}.
\]
Then, for any compact time-interval $J\subset I$ containing $t_0$, there exists an $\eps_0(J)>0$, and a constant $C>0$, 
such that for any $0<\e\le \e_0(J)$ the unique solution $\psi^\e \in C(J; H^k(\R^d))$ to \eqref{eq:nls} exists and, in addition,
\[
\sup_{t\in J} \| \psi^\e(t,\cdot)  - \psi^\e_{N-1} (t,\cdot) \|_{H^k(\R^d)} \le C \e^{N-k}.
\]
\end{proposition}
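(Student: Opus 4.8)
The plan is to prove Proposition~\ref{prop:stab} by a standard stability/bootstrap argument for the error function $w^\e := \psi^\e - \psi^\e_{N-1}$, combined with the $\e$-weighted energy estimates that are forced by the singular factor $1/\e$ in front of the potential and in Duhamel's formula. First I would record that, by Proposition~\ref{prop:asympt} applied with $N-1$ in place of $N$, the approximate solution satisfies
\[
i\e\partial_t\psi^\e_{N-1} + H(t)\psi^\e_{N-1} + \lambda\e|\psi^\e_{N-1}|^{2\sigma}\psi^\e_{N-1} = r^\e_{N-1},
\qquad \sup_{t\in I}\|r^\e_{N-1}(t,\cdot)\|_{H^k}\le C\e^{N},
\]
so that $w^\e$ solves a perturbed NLS
\[
i\e\partial_t w^\e + H(t)w^\e + \lambda\e\big(|\psi^\e_{N-1}+w^\e|^{2\sigma}(\psi^\e_{N-1}+w^\e) - |\psi^\e_{N-1}|^{2\sigma}\psi^\e_{N-1}\big) = -r^\e_{N-1},
\]
with $\|w^\e(t_0,\cdot)\|_{H^k}\le C_0\e^{N}$ coming from the hypothesis. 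Local existence of $\psi^\e$ on some (a priori $\e$-dependent) interval is guaranteed by Lemma~\ref{lem:exist}; the goal is to show the existence time can be taken to be $J$ and that $\|w^\e\|_{H^k}$ stays $O(\e^{N-k})$ there.

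Next I would set up the bootstrap. Fix $J\subset I$ compact and define, for $0<\e\le\e_0$, the maximal time $T^\e_*\le\sup J$ up to which the solution exists and obeys the a priori bound $\sup_{t_0\le t\le T^\e_*}\|w^\e(t,\cdot)\|_{H^k}\le \e^{N-k}$ (and symmetrically for $t\le t_0$). On this interval $w^\e$ is uniformly bounded in $H^k\hookrightarrow L^\infty$, and $\psi^\e_{N-1}$ is bounded in $C^1_{\rm b}(I;H^k)$ by Lemma~\ref{lem:regular}, so Lemma~\ref{lem:moser} applies with $w=\psi^\e_{N-1}$ and $\eta=w^\e$ to give, for each $t$,
\[
\big\| |\psi^\e_{N-1}+w^\e|^{2\sigma}(\psi^\e_{N-1}+w^\e) - |\psi^\e_{N-1}|^{2\sigma}\psi^\e_{N-1}\big\|_{H^k} \le K\|w^\e\|_{H^k},
\]
with $K$ independent of $\e$. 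Then I would perform an $H^k$ energy estimate: multiply the equation for $w^\e$ by $\e^{-1}$, apply $\partial^\alpha$ for $|\alpha|\le k$, pair with $\partial^\alpha\overline{w^\e}$ in $L^2$ and take imaginary parts. The self-adjoint kinetic term $-\tfrac12\Delta$ drops out; the potential term contributes commutators $[\partial^\alpha,V]$ bounded using $V\in C^1_{\rm b}(I;\mathcal S)$ by $C\|w^\e\|_{H^k}$; the difference-of-nonlinearities term contributes $\le \lambda K\|w^\e\|_{H^k}^2$; and the source contributes $\e^{-1}\|r^\e_{N-1}\|_{H^k}\|w^\e\|_{H^k}\le C\e^{N-1}\|w^\e\|_{H^k}$. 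The crucial point is that the potential term is $O(1)$ (not $O(1/\e)$) after pairing, because $\mathrm{Im}\langle\partial^\alpha(V w^\e),\partial^\alpha w^\e\rangle = \mathrm{Im}\langle[\partial^\alpha,V]w^\e,\partial^\alpha w^\e\rangle$ — the bare multiplication by $V$ is self-adjoint. This yields
\[
\frac{d}{dt}\|w^\e(t,\cdot)\|_{H^k} \le C\|w^\e(t,\cdot)\|_{H^k} + C\e^{N-1},
\]
and Gronwall's lemma over $J$ gives $\sup_{t\in J}\|w^\e(t,\cdot)\|_{H^k}\le e^{C|J|}(C_0\e^{N}+C\e^{N-1}|J|)\le C'\e^{N-1}$.

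Finally, since $N>k$, we have $\e^{N-1}\le \e^{k}\cdot\e^{N-1-k}$... more simply, $C'\e^{N-1}\le \tfrac12\e^{N-k}$ for $\e$ small enough (as $N-1\ge N-k$ when $k\ge1$, with strict improvement), which strictly beats the bootstrap threshold $\e^{N-k}$; hence a standard continuity argument shows $T^\e_*=\sup J$, so the solution extends to all of $J$ and the stated bound $\sup_{t\in J}\|w^\e(t,\cdot)\|_{H^k}\le C\e^{N-k}$ holds. I expect the main obstacle to be organizing the energy estimate so that every appearance of $1/\e$ is genuinely controlled: one must check carefully that the only place $1/\e$ survives is multiplying $r^\e_{N-1}$ (which is $O(\e^N)$, leaving a net $O(\e^{N-1})$), and that the commutator structure of the potential term and the gauge-invariance/Moser structure of the nonlinear difference term each produce only $\e$-independent constants. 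A secondary technical point is the backward-in-time direction and gluing the local existence intervals of Lemma~\ref{lem:exist} into the fixed interval $J$, which the bootstrap bound handles since it keeps $\|\psi^\e\|_{H^k}$ uniformly bounded.
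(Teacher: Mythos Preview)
There is a genuine gap in your energy estimate. Your claim that ``the potential term is $O(1)$ (not $O(1/\e)$) after pairing'' is incorrect. The identity $\mathrm{Im}\langle\partial^\alpha(Vw^\e),\partial^\alpha w^\e\rangle = \mathrm{Im}\langle[\partial^\alpha,V]w^\e,\partial^\alpha w^\e\rangle$ is true, but you have lost track of the prefactor: after dividing the equation by $\e$, the \emph{entire} Hamiltonian $H(t)$ carries a factor $\e^{-1}$, so what actually enters the estimate is
\[
\frac{1}{\e}\,\mathrm{Im}\langle[\partial^\alpha,V]w^\e,\partial^\alpha w^\e\rangle .
\]
The commutator $[\partial^\alpha,V]$ is multiplication by a real function only when $|\alpha|=1$; even then $\mathrm{Im}\langle(\partial_j V)w^\e,\partial_j w^\e\rangle$ has no reason to vanish, and for higher $|\alpha|$ the commutator is a differential operator of order $|\alpha|-1$. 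The best one gets is a contribution of size $\e^{-1}\|w^\e\|_{H^{|\alpha|-1}}\|\partial^\alpha w^\e\|_{L^2}$. Feeding $\e^{-1}\|w^\e\|_{H^k}$ straight into Gronwall would produce a factor $e^{C|J|/\e}$ and the argument collapses.

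This is precisely why the paper's proof is organised as an induction on the order of differentiation rather than a single $H^k$ estimate. At level $s=0$ there is no commutator and one obtains $\|\eta^\e\|_{L^2}\lesssim\e^N$; at level $s=1$ the commutator term is $\e^{-1}\|\eta^\e\|_{L^2}\lesssim\e^{N-1}$, which Gronwall turns into $\|\eta^\e\|_{H^1}\lesssim\e^{N-1}$; iterating, one loses one power of $\e$ per derivative and ends with $\|\eta^\e\|_{H^k}\lesssim\e^{N-k}$. The loss of $k$ powers of $\e$ in the statement of Proposition~\ref{prop:stab} is therefore not slack but is forced by this mechanism; your argument, were it correct, would yield the strictly stronger bound $C'\e^{N-1}$, which should have been a warning sign. (A secondary difference: the paper runs the argument with $\psi_N^\e$ rather than $\psi_{N-1}^\e$, gaining one extra power of $\e$ in the remainder, and only switches to $\psi_{N-1}^\e$ at the end via the triangle inequality. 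Your choice is admissible but costs a power of $\e$ in the source term.)
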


Note that this result in particular implies that the solution $\psi^\e$ to \eqref{eq:nls} cannot exhibit blow-up on any finite time-interval $J\subset I\subseteq\R$.

\begin{proof} 
Let $J = [t_0-T_1, t_0+T_2]\subset I$, for some $T_1, T_2>0$ independent of $\eps$. 
From Lemma \ref{lem:exist} we obtain the existence of a unique solution 
$\psi^\eps\in C([t_0-T_1^\e,t_0+T_2^\eps],H^k(\R^d))$ with $k>\frac{d}{2}$, to \eqref{eq:nls}. 
We denote the difference between the exact and the approximate solution by \[ 
\eta^\eps:= \psi^\eps-\psi_{N}^\eps.\] 
Proposition \ref{prop:asympt} then implies that
$\eta^\eps \in C([t_0-\tau_1^\e,t_0+\tau_2^\eps],H^k(\R^d))$, where $\tau_{j}^\eps=\min
(T_{j}^\eps,T_{j})$, with $j=1,2$. We prove that for $\eps$ sufficiently small, $\eta^\eps$ 
may be extended up to the time-interval $J\subset I$, with $\eta^\eps \in C(J,
H^k(\R^d))$. For simplicity, we shall only show the argument for the times bigger than $t_0$. A similar argument applies on the 
time interval $[t_0-T_1 , t_0]$.

Take $\eps_0>0$ so that $C_0\eps_0\leq\frac{1}{2}$, and for 
$\eps\in]0,\eps_0]$, let  
$$t^\eps := \sup \Big\{ t\ge t_0 \mid
  \sup_{t'\in[t_0,t]}\|\eta^\eps(t')\|_{H^k(\R^d)}\le 1\Big\}.$$
We already know that $t^\eps>0$ by the local existence result for
$\psi^\eps$. By possibly reducing $\eps_0>0$ even further, we shall show that $t^\eps\ge t_0+T_2$. 
The error satisfies
\begin{align}\label{eq:error}
i \e \partial_t \eta^\e = H(t) \eta^\e + \lambda \eps \left( |\psi^\e_N + \eta^\e|^{2\sigma} (\psi^\e_N + \eta^\e) - |\psi^\e_N |^{2\sigma} \psi^\e_N \right) + r_N^\e ,
\end{align}
subject to $ \eta^\e _{\mid t=t_0} = \eta^\e_{\rm in}(x)$, where $\|\eta^\e_{\rm in}\|_{H^k}=O(\eps^{N+1})$ by assumption. 

Next, we multiply \eqref{eq:error} by $\overline \eta^\e$, integrate over $\R^d$, and take the real part of the resulting expression. Since $H(t)$ is self-adjoint, this yields
\[
\partial_t \| \eta^\e \|_{L^2} \lesssim \big \| |\psi^\e_N + \eta^\e|^{2\sigma} (\psi^\e_N + \eta^\e) - |\psi^\e_N |^{2\sigma} \psi^\e_N \big \|_{L^2} + \frac{1}{\eps} \| r^\eps_N\|_{L^2}.
\]
In view of Proposition \ref{prop:asympt}, we have $\| r^\eps_N\|_{H^k} = O(\eps^{N+1})$. On the other hand, 
for $k>\frac{d}{2}$ the Gagliardo-Nirenberg inequality implies
\[
\| \eta^\e \|_{L^\infty} \lesssim  \| \eta^\e \|_{H^k} \lesssim 1 \quad \forall t\in [t_0, t^\e],
\]
and we will show, that in fact $\| \eta^\e\|_{L^\infty}$ is (asymptotically) small. To this end, we first recall that since
$\psi_N(t,\cdot) \in H^{m}(\R^d)$ for all $m\ge0$ we also have that $\psi_N(t,\cdot)\in W^{s, \infty}(\R^d)$, for all $s\le k$. 
Applying Lemma \ref{lem:moser} with $s=0$, we consequently obtain
\[
\partial_t \| \eta^\e \|_{L^2} \le K \| \eta^\e \|_{L^2} + C \eps^N ,
\]
for $t\in [t_0, t^\e]$ and, by using Grownwall's lemma, we thus find
\begin{equation}\label{eq:bound}
 \| \eta^\e \|_{L^2} \leq C_1  \eps^N,  \quad \forall t\in [t_0, t^\e].
\end{equation}
The idea is now to obtain a similar estimate for (weak) derivatives of $\eta^\e$, in order to close the argument in $H^k(\R^d)$. To this end, we first note that 
\[
i \e \partial_t (\nabla \eta^\e) = H(t) (\nabla \eta^\e) + [\nabla , H(t)] \eta^\e + 
\lambda \eps \nabla \left(F(\psi_N^\e+\eta^\e) - F(\psi_N^\e)\right) + 
\nabla r_N^\e ,
\]
and the same type of argument as before, together with the Cauchy Schwartz inequality, yields
\[
\partial_t \| \nabla \eta^\e \|_{L^2} \lesssim 
\|  \nabla \left(F(\psi_N^\e+\eta^\e) - F(\psi_N^\e)\right) \|_{L^2}  +\frac{1}{\eps} \| [\nabla , H(t)]  \eta^\e\|_{L^2}+ \frac{1}{\eps} \| \nabla r^\eps_N\|_{L^2}.
\]
Now $[\nabla , H(t)] = \nabla V(t,x)$, which is bounded by assumption, and so
\[
\partial_t \| \nabla \eta_N^\e \|_{L^2} \lesssim \|  \nabla \left(F(\psi_N^\e+\eta^\e) - F(\psi_N^\e)\right) \|_{L^2} + \frac{1}{\eps} \| \eta^\e \|_{L^2} + C \eps^N.
\]
Invoking again Lemma \ref{lem:moser} with $s=1$, and the bound \eqref{eq:bound}, we infer that $\forall t\in [t_0, t^\e]$ it holds
\[
\partial_t \| \nabla \eta^\e \|_{L^2} \lesssim \|  \nabla \eta^\e \|_{L^2} +  \eps^{N-1},
\]
and Grownwall's lemma, together with \eqref{eq:bound}, then yields
\[
\|  \eta^\e \|_{H^1} \lesssim  \eps^{N-1},  \quad \forall t\in [t_0, t^\e],
\]
Iterating in $s \le k$, we obtain, more generally
\begin{equation}\label{eq:kest}
\|  \eta^\e \|_{H^k} \lesssim  \eps^{N-k},  \quad \forall t\in [t_0, t^\e].
\end{equation}
and the Gagliardo-Nirenberg inequality consquently implies
\[
\| \eta^\e \|_{L^\infty} \lesssim  \| \eta^\e \|_{H^k} \lesssim \eps^{N-k} \quad \forall t\in [t_0, t^\e],
\]
provided $k>\frac{d}{2}$. For $N-k>0$, continuity of $\| \eta^\e(\cdot ,t)\|_{H^k}$ implies that $t^\e\ge t_0+T_2$, for $\eps\le \eps_0(T_2)$ sufficiently small, since if 
$t^\e < t_0 +T_2$ the estimate 
\[
\sup_{t\in [0,t^\e] }\| \eta^\e \|_{L^\infty} \lesssim  \| \eta^\e \|_{H^k} \lesssim \eps^{N-k} < \frac{1}{2},
\]
(after possibly reducing $\eps_0$) shows that $\eta^\e$ can be continued beyond $t^\e$, a contradiction. 
In particular, we obtain that  $\eta^\e$, and hence $\psi^\e$, is well defined for all $t\in[t_0, t_0 + T_2]$, thus showing $T_2^\e \ge T_2$. Since the same argument can be 
applied for times smaller than $t_0$, we finally conclude that $\psi^\e$ 
is well defined for all $t\in J = [t_0-T_1, t_0+T_2]\subset I$ and $0 < \eps \le \eps_0(J)$.

To complete the proof of the theorem, we note that \eqref{eq:kest} implies
\[
\sup_{t\in J}\|  \psi^\e - \psi^\e_{N} \|_{H^{k}} \lesssim  \eps^{N-k},
\]
and since $N>k$, we also have
\[
\sup_{t\in J}\| \psi^\e_{N} - \psi^\e_{N-1} \|_{H^{k}} \lesssim \eps^{N}=o(\eps^{N-k}).
\]
Thus, we can use the triangle inequality to replace $\psi^\e_{N}$ with $\psi^\e_{N-1}$ in our estimate, which yields the desired result.
\end{proof}

Proposition \ref{prop:stab} directly implies the result stated in Theorem \ref{thm:main}. Due to our method of proof, 
Proposition \ref{prop:stab} yields a loss in accuracy for the obtained error estimates, which we are unable to overcome at this point. 


\section{Possible extensions and variations}\label{sec:ext}

\subsection{Remarks on closely related cases} In this section we collect several remarks on how to extend Theorem \ref{thm:main} to other, closely related, situations.

\subsubsection{Degenerate Eigenvalues} The results above readily generalize to the case of an $M$-fold degenerate eigenvalue $E(t)$, satisfying the gap condition \eqref{gap}. 
In this case there exists 
a smooth basis $\chi_{\ell}(t,\cdot)\in L^2(\R^d)$, where $\ell= 1, \dots, L$, of the associated eigenspace. The associated projection onto the eigenspace corresponding to $E(t)$ then becomes 
\[
P(t) = \sum_{\ell=1}^L | \chi_\ell(t,\cdot)\rangle \langle \chi_\ell(t,\cdot) |.
\]
Using this, one can proceed along the same lines as above to obtain that 
\[
\sup_{t \in \tilde I } \Big \|\psi^\e(t,\cdot) - e^{i \varphi(t)/\eps} \sum_{\ell=1}^L u_{0,\ell}(t) \chi_\ell(t,\cdot) \Big \|_{H^k} \le C \eps. 
\]
However, the formulas in general become more complicated, since 
the coefficient functions, $u_{\ell,0}(t)$ are now determined by an $L\times L$ system of ordinary differential equations (leading to matrix-valued 
Berry phases and analogous nonlinear phase modulations). This consequently leads to rather tedious computations in the 
subsequent steps of our asymptotic expansion, which is why we shall not go into further details here. In addition, the global (on $I$) existence 
of the solutions to the $L\times L$ system is not clear, a-priori. However, Proposition \ref{prop:stab} applies on any interval $\tilde I\subseteq I$ on which the $u_{\ell}$'s are defined.

\subsubsection{Quadratic potentials} In view of a possible application to Bose-Einstein condensates, the assumption that $V(t,x)$ vanishes as $|x|\to \infty$ seems unrealistic, since 
one typically considers trapping potentials of the form
\[
V(t,x) = \sum_{j=1}^d \Omega_j(t) x_j^2,\quad \Omega(t)\in \R,
\]
i.e., a time-dependent harmonic oscillator. There is, however, no fundamental difficulty in extending our result to such a situation. 
Indeed, as long as $\Omega_j(t)>0$, the existence of eigenvalues $E(t)$ together with their associated smooth (and rapidly decaying) eigenfunctions is guaranteed (see, e.g., \cite{Tebook}), and 
the asymptotic expansion, stays (at least formally) exactly the same as before. 
Only from the point of view of rigorous estimates, one needs to shift from the usual Sobolev space setting $H^k(\R^d)$, to weighted spaces of the form
\[
\Sigma^k = H^k(\R^d)\cap \{ |x|^k f \in L^2(\R^d) \}.
\]
The basic existence and well-posedness theory for NLS in such weighted spaces has been established in \cite{Car}, yielding a unique solution $\psi^\e \in C(\tilde I; \Sigma^k)$ on some $
\tilde I\subseteq I$, 
provided 
$\psi_{\rm in} \in \Sigma^k$. Moreover, an extension of the Schauder Lemma \ref{lem:moser} to $\Sigma^k$ is straightforward. 
The only extra work needed is in the proof of the nonlinear stability, where now $[\nabla , H(t)] = \nabla V(t,x)$ is no longer bounded. 
However, since $\| \nabla V \eta^\e \|_{L^2} \simeq \| x \eta^\e \|_{L^2}$, and since $[x , H(t)] = \nabla$, a closed set of estimates for the {\it combined} $L^2$-norms of $x\eta^\eps$ and $\nabla \eta^\eps$ 
(and thus for the $\Sigma^1$-norm of $\eta^\e$) can be obtained, cf. \cite{CMS} for more details. Iterating this then yields a stability result in $\Sigma^k$.

\subsubsection{The intermediate regime $1<\alpha <2$}\label{sec:inter}
We go back to the discussion started in Remark \ref{rem:critical} and consider the slightly more general situation of
\[
i \eps \partial_t \psi^\e = -\frac{1}{2} \Delta \psi^\e + V(t,x) \psi^\e + \lambda \e^\alpha |\psi^\e|^{2\sigma} \psi^\e, \quad \alpha \ge 1.
\]
We already know that if $\alpha =1$ the problem is critical, and that if $\alpha \ge 2$, the problem is sub-critical (i.e., linearizable). 
The intermediate regime $1<\alpha <2$, however, is slightly more complicated, since the asymptotic expansion used before fails to match the 
size of the nonlinearity. One way to overcome this problem is to include the nonlinearity in the equation of order $O(\e)$, which yields
\begin{equation*}\label{eq:gennlstep2}
L_E(t) U_1(t,x) = i \partial_t U_0(t,x)+  \e^{\alpha -1} \lambda |U_0|^{2 \sigma} U_0,
\end{equation*}
instead of \eqref{eq:nlstep2}. For $\eps^{\alpha -1}\ll 1$ this can be seen as a regular perturbation problem of the associated linear situation. The 
corresponding solvability condition now yields an $\e$-dependent leading order amplitude  of the form 
\begin{equation*}
U^\eps_0(t,x) = \chi(t,x) e^{ - \beta(t) + i \eps^{\alpha-1} \theta(t)},
\end{equation*}
with $\theta(t)$ given by \eqref{eq:nlamp}. The nonlinear phase modulation appearing in this expression is obviously 
rather weak, due to the small $\eps^{\alpha -1}\ll 1$ factor in front. The price to pay is that now all the $U_n^\e$ become $\eps$-dependent. An alternative 
approach would be to consider a modified asymptotic expansion, which includes powers of $\eps^\alpha$. This can be done in principle, but is rather cumbersome 
and we will leave the details to the reader.

\subsection{Connection to nonlinear bound states} \label{sec:nonlin}

We finally turn to the connection with nonlinear bound states and thus, to the proof of Corollary \ref{cor:nonlin}. To this end, we first recall that, 
for fixed $t\in \R$, nonlinear bound states are solutions to the stationary Schr\"odinger equation
\begin{equation}\label{eq:stat}
 -\frac{1}{2} \Delta \Phi + V(t,x) \Phi + \lambda |\Phi|^{2\sigma} \Phi = E_* \Phi.
\end{equation}
Denoting by $E(t)<0$ a simple eigenvalue of the linear Hamiltonian $H(t)$, 
standard bifurcation theory (cf. \cite{Ni}) then ensures the existence of such nonlinear bound states $\Phi$ bifurcating 
from the zero solution at the eigenvalues $E(t)$. Depending on the type of eigenvalue $E(t)$, these bound states 
are called nonlinear ground states, or nonlinear excited states, respectively (see, e.g., \cite{SoWe2, Ts}). 
To be more precise, we recall the following result:

\begin{lemma} 
\label{lem:bound}
Let $t\in I$ be fixed, $V(t,\cdot) \in \mathcal S(\R^d)$ and denote by $ E(t)<0$ a simple eigenvalue separated from the rest of ${\rm spec}\,( H(t))$. 
For $\lambda >0$ let $E_*(t) \in (E(t),0)$ and for $\lambda <0$, let $E_*(t)<E(t)$. Then there exists a solution $(E_*(t), \Phi(t, \cdot))$ to \eqref{eq:stat},
such that $E_* (t) \mapsto \| \Phi(t,\cdot) \|_{H^2}$ is smooth for $E_*\not = E$ and
\[
\lim_{E_* \to E} \| \Phi(t,\cdot) \|_{H^2} = 0.
\]
In addition, for $\eps_1<1$ sufficiently small and $\frac{1}{\lambda} (E_* - E)< \eps_1$ we have
\[
\left \| \, \Phi (t,\cdot)- \left(\frac{E_*(t)-E(t)}{\mu(t)}\right)^{\frac{1}{2\sigma}} \, \chi(t,\cdot) \,  \right \|_{H^2(\R^d)} = O(E_*(t)-E(t)),
\]
where $\chi(t,\cdot)\in L^2(\R^3)$ is a normalized eigenfunction associated to the linear eigenvalue $E(t)$ and 
\[
\mu(t) := \lambda  \| \chi(t,\cdot) \|^{2\sigma+2}_{L^{2\sigma+2}}.
\] 
\end{lemma}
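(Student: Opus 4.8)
The plan is to realize the nonlinear bound state as a bifurcation from the zero solution via the Lyapunov--Schmidt reduction, exploiting that $E(t)$ is a simple, gap-separated eigenvalue. First I would fix $t\in I$ and write $\Phi = a\,\chi(t,\cdot) + \phi$, where $a\in\C$ (or, using gauge invariance, $a>0$) and $\phi \perp \chi(t,\cdot)$ in $L^2$. Substituting into \eqref{eq:stat} and splitting with the spectral projection $P(t)=|\chi(t,\cdot)\rangle\langle\chi(t,\cdot)|$ gives two equations: the projected part $(1-P(t))$ yields
\[
L_E(t)\phi + (E-E_*)\phi = -\lambda\,(1-P(t))\big(|a\chi+\phi|^{2\sigma}(a\chi+\phi)\big),
\]
which, since $L_E(t)$ is boundedly invertible on $\mathrm{ran}(1-P(t))$ with the partial resolvent $L_E^{-1}(t)$ of \eqref{eq:resol} and since $H^2(\R^d)$ is an algebra (using $V(t,\cdot)\in\mathcal S$), can be solved for $\phi$ as a smooth function of $(a,E_*)$ near $(0,E)$ by the implicit function theorem, with $\phi = O(|a|^{2\sigma+1})$. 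Substituting this back, the $P(t)$-component gives the scalar bifurcation equation
\[
(E_* - E)\,a = \lambda\,\langle |a\chi+\phi|^{2\sigma}(a\chi+\phi),\chi\rangle_{L^2} = \lambda |a|^{2\sigma}a\,\|\chi(t,\cdot)\|_{L^{2\sigma+2}}^{2\sigma+2} + O(|a|^{4\sigma+1}),
\]
so dividing by $a$ (the nontrivial branch) yields $E_* - E = \mu(t)|a|^{2\sigma} + O(|a|^{4\sigma})$, hence $|a|^{2\sigma} = (E_*-E)/\mu(t) + O\big((E_*-E)^2\big)$ provided $(E_*-E)/\lambda>0$, which matches the sign conditions on $E_*$ for $\lambda$ of either sign. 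Taking the $2\sigma$-th root gives $a = \big((E_*-E)/\mu(t)\big)^{1/(2\sigma)}(1+O(E_*-E))$, and combined with $\phi = O(|a|^{2\sigma+1}) = O\big((E_*-E)^{(2\sigma+1)/(2\sigma)}\big) = O(E_*-E)$ this produces exactly the claimed $H^2$-expansion $\Phi - \big((E_*-E)/\mu\big)^{1/(2\sigma)}\chi = O(E_*-E)$, as well as $\|\Phi(t,\cdot)\|_{H^2}\to 0$ as $E_*\to E$ and smoothness of $E_*\mapsto\|\Phi(t,\cdot)\|_{H^2}$ for $E_*\neq E$ from the smoothness of the implicit-function-theorem solution.

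For the regularity $\Phi(t,\cdot)\in H^2(\R^d)$ I would invoke the same elliptic bootstrap as in the proof of Lemma \ref{chiregularity}: since $\Phi\in L^2$ solves $-\tfrac12\Delta\Phi = (E_* - V(t,\cdot))\Phi - \lambda|\Phi|^{2\sigma}\Phi$ with $V(t,\cdot)\in\mathcal S$ bounded, elliptic regularity (cf. \cite[Proposition 1.2]{Hi}) gives $\Phi\in H^2$, and in fact higher regularity if desired, which is why all the nonlinear estimates above can be carried out in the algebra $H^2(\R^d)$ for $d\le 3$ (for higher $d$ one would start the bootstrap in $H^k$ with $k>d/2$).

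The main obstacle is the degeneracy of the bifurcation equation at $a=0$: the map $a\mapsto |a|^{2\sigma}a$ is not smooth in $a$ as a real-analytic function of two real variables when $\sigma\ge 1$, so the naive implicit function theorem does not apply directly to the scalar equation. The standard remedy, which I would follow, is to use the gauge symmetry $\Phi\mapsto e^{i\vartheta}\Phi$ of \eqref{eq:stat} to fix $a\in(0,\infty)$, and then to change variables $\rho := a^{2\sigma}$, so that the reduced equation becomes $E_*-E = \mu(t)\rho + O(\rho^2)$, which \emph{is} smooth and invertible near $\rho=0$ by the implicit function theorem; one recovers $a = \rho^{1/(2\sigma)}$ afterwards. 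Care is needed to check that $\phi$, solved from the projected equation, is consistent with $a$ real and that the leading coefficient $\mu(t)=\lambda\|\chi(t,\cdot)\|_{L^{2\sigma+2}}^{2\sigma+2}$ is nonzero — which holds since $\chi(t,\cdot)\neq 0$ and $\lambda\neq 0$ — so that the branch is genuinely one-dimensional and the square-root scaling is exact. The uniformity in $t\in J$, needed so that $t\mapsto\Phi(t,\cdot)$ is continuous into $H^2$, follows because all constants in the implicit function theorem depend only on $\delta$, on bounds for $\|V(t,\cdot)\|_{C_b^k}$, and on $\|\chi(t,\cdot)\|_{H^k}$, all of which are uniform on the compact interval $J$ by Assumptions \ref{ass1}--\ref{ass2} and Lemma \ref{chiregularity}.
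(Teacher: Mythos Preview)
The paper does not actually give a proof of this lemma; it simply cites the existing literature (Rose--Weinstein \cite{RoWe}, Soffer--Weinstein \cite{SoWe}, Tsai--Yau \cite{TY}, Tsai \cite{Ts}) and the general bifurcation framework of Nirenberg \cite{Ni}. So there is no ``paper's own proof'' to compare against in any detailed sense.

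Your Lyapunov--Schmidt reduction is correct and is precisely the method underlying those cited references: the decomposition $\Phi = a\chi + \phi$ with $\phi\perp\chi$, solving the $(1-P)$-equation for $\phi$ via the partial resolvent $L_E^{-1}(t)$, and extracting the scalar bifurcation relation $E_*-E = \mu(t)|a|^{2\sigma} + O(|a|^{4\sigma})$ is the standard route, and your treatment of the degeneracy at $a=0$ via the gauge reduction $a>0$ and the substitution $\rho = a^{2\sigma}$ is the usual cure. The orders you obtain, $\phi = O(|a|^{2\sigma+1}) = O\big((E_*-E)^{1+1/(2\sigma)}\big)$ and $a - \big((E_*-E)/\mu\big)^{1/(2\sigma)} = O\big((E_*-E)^{1+1/(2\sigma)}\big)$, are in fact slightly sharper than what the lemma asserts, so the claimed $O(E_*-E)$ bound follows.

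One small caveat: you invoke the algebra property of $H^2(\R^d)$, which holds only for $d\le 3$; you acknowledge this and sketch the bootstrap for higher $d$. Note that the lemma itself slips into writing $\chi(t,\cdot)\in L^2(\R^3)$, so the paper tacitly has the low-dimensional case in mind, and your remark is adequate. The final paragraph on uniformity in $t\in J$ goes beyond what the lemma states (which is for fixed $t$) and anticipates Assumption~\ref{ass3}; the paper in fact defers that continuity-in-time statement to \cite{Gr} rather than proving it, so you are supplying more than is required here.
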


This result (for time-independent potentials $V=V(x)$) is stated in \cite[Theorem 2.1]{SoWe} in the case where 
the linear Hamiltonian has exactly one eigenvalue $E<0$. The formula for $\Phi$ is a result of a perturbation calculation and already stated in \cite{RoWe}. 
A more detailed proof in the case where $H$ admits exactly two simple eigenvalues, $d=3$, $\sigma =1$, and 
$\lambda >0$, can be found in \cite[Lemma 2.1]{TY}. The situation for several linear eigenvalues is discussed in, e.g., \cite{Ts}, while \cite{GP} studies the the case of degenerate 
eigenvalues.

\begin{remark}
One should note that in several of the aforementioned works (cf. \cite{Ts, TY, SoWe2}), the additional assumption that $0$ is 
neither an eigenvalue nor a resonance for the linear Hamiltonian $H = -\Delta +V$ is imposed. 
This condition, however, is not used in the existence proof of nonlinear bound states (it ensures the applicability of certain dispersive estimates for the associated Schr\"odinger group \cite{PiWa}).
\end{remark}

Hence, for any fixed $t\in I$, there is an $\eps_1(t)>0$, such that as long as $\frac{1}{\lambda} (E_*(t) - E(t))< \eps_1(t)$ we have a nonlinear bound state of size 
\[ 
M(t):=\| \Phi(t,\cdot) \|_{L^2} = O(\eps_1^{1/(2\sigma)}).
\]
Unfortunately, Lemma \ref{lem:bound} does not provide any assertion on the time-dependence of $t\mapsto M(t)$. 
In fact it has been shown in \cite{Gr}, that for a given compact time-interval $ J\subset I$, and $M>0$ sufficiently small there exists a 
unique family of bound states $t\mapsto \Phi (t,\cdot)\in H^2(\R^d))$ continuously depending on time, and with constant mass $\| \Phi(t,\cdot)\|^2_{L^2} = M$. However, since \cite{Gr} is 
currently only published in a PhD thesis (the corresponding paper \cite{GG} is being finalized), we shall not take this result for granted but only 
assume that such a property holds. More precisely, we impose: 

\begin{hyp}\label{ass3}
Let $J\subset I$ be a compact time-interval and assume 
that there is an $M_0(J)>0$ such that for all $M\le M_0$, 
there exists a unique family of solutions $t\mapsto \Phi (t,x)$ to \eqref{eq:stat} with $\Phi  \in C(J; H^2(\R^d))$ and $\| \Phi(t,\cdot)\|^2_{L^2} = M$. 
\end{hyp}

This allows us to prove prove the following result, in which $\eps_0(J)>0$ denotes the same constant as in Theorem \ref{thm:main}.

\begin{proposition}\label{prop:nonlin1}
Let $\sigma \in \N$, $\lambda \in \R$, and Assumptions \ref{ass1}, \ref{ass2} and \ref{ass3} hold. Moreover, let $\eps \in (0,\min(\eps_0(J), M_0(J))]$
and denote by $\Phi^\e(t,x)$ a family of bound states such that $\| \Phi^\e(t,\cdot)\|_{L^2}=\sqrt{\e}$. Finally, let  
$\Psi^\e$ be the solution to \eqref{eq:NLS} with initial data $\Psi^\e_{\rm in}= \sqrt{\e} \psi_{\rm in}^\e$ where $\psi^{\e}_{\rm in}$ satisfies the conditions of Theorem \ref{thm:main}. 
Then there exist a constant $K>0$ such that
\[
\sup_{t\in J} \left \| \Psi^\e(t,\cdot)  -  \Phi^\eps(t,\cdot) e^{-i \varphi^\e(t)} \right\|_{L^2(\R^3)} \le K \eps^{\min(3/2, \sigma)},
\]
where $\varphi^\e$ is given by
\[
\varphi^\e(t) = \frac{1}{\eps} \int_{t_0}^t E(s) ds +  \lambda \int_{t_0}^t \| \chi(s,\cdot)\|_{L^{2\sigma +2}}^{2\sigma+2} \,ds -i  \beta(t) .
\]
\end{proposition}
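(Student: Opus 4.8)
The plan is to deduce Proposition \ref{prop:nonlin1} from Theorem \ref{thm:main} and the bifurcation expansion of Lemma \ref{lem:bound} by a single triangle inequality, comparing both $\Psi^\e$ and the nonlinear bound state $\Phi^\e$ to the common reference profile $\sqrt\e\,\chi(t,\cdot)$. \emph{Rescaling the main theorem.} Since $\psi_{\rm in}^\e$ satisfies the hypotheses of Theorem \ref{thm:main}, the solution $\psi^\e$ of \eqref{eq:nls} obeys $\sup_{t\in J}\|\psi^\e(t,\cdot)-\chi(t,\cdot)e^{-i\varphi^\e(t)}\|_{L^2}\le C\e$ for $0<\e\le\e_0(J)$. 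Multiplying by $\sqrt\e$ and using $\Psi^\e=\sqrt\e\,\psi^\e$ immediately gives
\[
\sup_{t\in J}\big\|\Psi^\e(t,\cdot)-\sqrt\e\,\chi(t,\cdot)\,e^{-i\varphi^\e(t)}\big\|_{L^2}\le C\e^{3/2}.
\]

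\emph{Comparing $\Phi^\e$ with $\sqrt\e\,\chi$.} By Assumption \ref{ass3}, for each $t\in J$ the function $\Phi^\e(t,\cdot)$ solves \eqref{eq:stat} with some nonlinear energy $E_*^\e(t)$ and $\|\Phi^\e(t,\cdot)\|_{L^2}=\sqrt\e\to0$. The small-amplitude branch is, by Lemma \ref{lem:bound}, the one bifurcating from $E(t)$, so $E_*^\e(t)\to E(t)$ and Lemma \ref{lem:bound} applies once $\e$ is small, uniformly in $t\in J$ (the constants there depend only on the uniform bounds on $V$ from Assumption \ref{ass1}, on $\delta$, and on $\sup_J\|\chi(t,\cdot)\|_{H^2}$, which is finite by Lemma \ref{chiregularity}). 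Setting $a^\e(t):=\big((E_*^\e(t)-E(t))/\mu(t)\big)^{1/(2\sigma)}>0$ with $\mu(t)$ as in Lemma \ref{lem:bound}, and normalizing the (otherwise free) phase of $\Phi^\e(t,\cdot)$ so that Lemma \ref{lem:bound} holds in the stated form, we have
\[
\big\|\Phi^\e(t,\cdot)-a^\e(t)\chi(t,\cdot)\big\|_{H^2}=O\big(E_*^\e(t)-E(t)\big)=O\big(\mu(t)\,(a^\e(t))^{2\sigma}\big).
\]
Taking $L^2$ norms and using $\|\chi(t,\cdot)\|_{L^2}=1$, $\|\Phi^\e(t,\cdot)\|_{L^2}=\sqrt\e$, the reverse triangle inequality yields $|a^\e(t)-\sqrt\e|\le\|\Phi^\e-a^\e\chi\|_{L^2}=O((a^\e(t))^{2\sigma})$; since $2\sigma\ge2$ and $a^\e(t)\to0$, this bootstraps to $a^\e(t)\le2\sqrt\e$, hence $|a^\e(t)-\sqrt\e|=O(\e^\sigma)$ and $E_*^\e(t)-E(t)=O(\e^\sigma)$, all uniformly on $J$. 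Combining with $\sup_J\|\chi(t,\cdot)\|_{H^2}<\infty$,
\[
\sup_{t\in J}\big\|\Phi^\e(t,\cdot)-\sqrt\e\,\chi(t,\cdot)\big\|_{L^2}\le\sup_{t\in J}\Big(\|\Phi^\e-a^\e\chi\|_{L^2}+|a^\e-\sqrt\e|\,\|\chi\|_{L^2}\Big)=O(\e^\sigma).
\]
This phase normalization of $\Phi^\e$ plays no role in Corollary \ref{cor:nonlin}, which only involves the rank-one projector $|\Phi^\e\rangle\langle\Phi^\e|$.

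\emph{Conclusion, and the main obstacle.} Since $\beta(t)\in i\R$, the phase $\varphi^\e(t)$ built from \eqref{eq:berry} is real, so $|e^{-i\varphi^\e(t)}|=1$; combining the two displayed bounds via the triangle inequality,
\[
\sup_{t\in J}\big\|\Psi^\e(t,\cdot)-\Phi^\e(t,\cdot)\,e^{-i\varphi^\e(t)}\big\|_{L^2}\le\sup_{t\in J}\big\|\Psi^\e-\sqrt\e\,\chi\,e^{-i\varphi^\e}\big\|_{L^2}+\sup_{t\in J}\big\|\sqrt\e\,\chi-\Phi^\e\big\|_{L^2}\le C\e^{3/2}+C'\e^\sigma,
\]
which is $\le K\e^{\min(3/2,\sigma)}$ since $\e<1$. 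Corollary \ref{cor:nonlin} then follows by writing $\||f\rangle\langle f|-|g\rangle\langle g|\|_{L^2\to L^2}\le(\|f\|_{L^2}+\|g\|_{L^2})\|f-g\|_{L^2}$ and noting that $\|\Psi^\e(t,\cdot)\|_{L^2}$ and $\|\Phi^\e(t,\cdot)\|_{L^2}$ are $O(\sqrt\e)$, which produces the extra factor $\sqrt\e$ and hence the exponent $\min(2,\sigma+1/2)$. The only genuinely delicate point is the \emph{uniformity in $t\in J$} of the perturbative estimate of Lemma \ref{lem:bound}, which there is stated for a fixed $t$: one has to check that the implicit constants and the size of the bifurcation neighbourhood can be chosen uniformly on the compact interval $J$, using the $t$-uniform control of $V$, of the spectral gap $\delta$, and of the eigenfunction $\chi$. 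Everything else is routine bookkeeping.
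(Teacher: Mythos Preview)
Your proof is correct and follows essentially the same route as the paper: both argue by comparing $\Psi^\e$ and $\Phi^\e$ to the common reference $\sqrt\e\,\chi(t,\cdot)e^{-i\varphi^\e(t)}$, obtaining $O(\e^{3/2})$ from Theorem~\ref{thm:main} and $O(\e^\sigma)$ from the bifurcation expansion of Lemma~\ref{lem:bound} combined with the normalization $\|\Phi^\e\|_{L^2}=\sqrt\e$. The only cosmetic difference is that the paper extracts $|a^\e-\sqrt\e|$ via the identity $\sqrt a-\sqrt b=(a-b)/(\sqrt a+\sqrt b)$ applied to \eqref{eq:epsest}, whereas you use a short bootstrap ($|a^\e-\sqrt\e|=O((a^\e)^{2\sigma})$ implies $a^\e\le 2\sqrt\e$, hence $O(\e^\sigma)$); both arrive at the same bound. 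Your explicit remarks on the phase normalization of $\Phi^\e$ and on the $t$-uniformity of Lemma~\ref{lem:bound} are points the paper leaves implicit.
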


\begin{proof}
Using the triangle inequality we obtain
\begin{align*}
 \left \| \Psi^\e(t,\cdot)  -  \Phi^\e(t,\cdot) e^{-i \varphi^\e(t)} \right\|_{L^2} \le & \, \left \| \Psi^\e(t,\cdot)  -  \sqrt{\eps} \chi(t,\cdot) e^{-i \varphi^\e(t)} \right\|_{L^2} \\
 & \, +  \left \| \sqrt{\eps} \chi(t,\cdot)  -  \Phi^\e(t,\cdot) \right\|_{L^2}
 \end{align*}
 where in the second term on the right hand side we have used the fact that $\varphi^\e$ is purely time-dependent. In view of Theorem \ref{thm:main}, 
 the first term on the right hand side is $O(\eps^{3/2})$, uniformly for $t\in J$, where $J$ is any compact time interval $J\subset I$. 
 In order to estimate the second term, we use the triangle inequality once more to obtain
\begin{align*}
 \left \| \sqrt{\eps} \chi(t,\cdot)  -  \Phi^\e(t,\cdot) \right\|_{L^2} \le  &\, \left \|\left(\frac{E_*(t)-E(t)}{\mu(t)}\right)^{\frac{1}{2\sigma}}\, \chi(t,\cdot)   -  \Phi^\e(t,\cdot) \right\|_{L^2} \\
 & \,  +  
 \left | \left(\frac{E_*(t)-E(t)}{\mu(t)}\right)^{\frac{1}{2\sigma}} - \sqrt{\eps} \right |,
 \end{align*} 
 since $\| \chi (t,\cdot) \|_{L^2} =1$, by assumption. From Lemma \ref{lem:bound} we find that for all $t\in J$:
 \[
\left | \| \Phi^\e \|_{L^2} - \left(\frac{E_*(t)-E(t)}{\mu(t)}\right)^{\frac{1}{2\sigma}} \right| \le O(E_*(t) - E(t)),
 \]
 which together with the fact that $\| \Phi^\e(t,\cdot) \|_{L^2} =\sqrt{\eps} $, implies
 \begin{equation}
 \label{eq:epsest}
 \eps = \left(\frac{E_*(t)-E(t)}{\mu(t)}\right)^{\frac{1}{\sigma}}+O\big((E_*(t)-E(t))^{1+{\frac{1}{2\sigma}}}\big).
 \end{equation}
 In particular \eqref{eq:epsest} together with Lemma \ref{lem:bound} implies
 \[
 \left \| \left(\frac{E_*(t)-E(t)}{\mu(t)}\right)^{\frac{1}{2\sigma}}\, \chi(t,\cdot)   -  \Phi^\e(t,\cdot) \right\|_{L^2} \lesssim \e^{\sigma}.
 \]
On the other hand, using $\sqrt{a} - \sqrt{b} = (a-b)/ (\sqrt{a} + \sqrt{b})$, we also have 
 \[
 \left |  \left(\frac{E_*(t)-E(t)}{\mu(t)}\right)^{\frac{1}{2\sigma}} - \sqrt{\eps} \right | \lesssim \eps^\sigma.
 \]
 In summary, we find that for $t\in  J$ it holds
 \[
 \left \| \Psi^\e(t,\cdot)  -  \Phi(t,\cdot) e^{-i \varphi^\e(t)} \right\|_{L^2} \lesssim \eps^{3/2} + \eps^\sigma \lesssim \eps^{\min(3/2, \sigma)},
 \]
 which yields the desired result.
\end{proof}

\begin{remark}
Interestingly, the proof shows that in the asymptotic regime considered and in the case $\sigma =1$ (cubic nonlinearity), 
the linear eigenfunction (endowed with the appropriate phase factor) satisfies a better approximation estimate than the nonlinear one.
However, for $\sigma \ge 2$, the approximation rate is the same for both the linear and the nonlinear subspace.  
\end{remark}

\begin{proof}[Proof of Corollary \ref{cor:nonlin}] In order to estimate the operator norm of 
\[
|\Psi^\e \rangle \langle  \Psi^\e  | -  |\Phi^\e \rangle \langle  \Phi^\e | = |\Psi^\e \rangle \langle  \Psi^\e  | -  |\Phi^\e e^{-i \varphi^\e} \rangle \langle  \Phi^\e e^{-i \varphi^\e} |\, , 
\]
it is enough to consider $f\in L^2(\R^d)$, with $\| f\|_{L^2}=1$ and estimate
\begin{align*}
 &  \| |\Psi^\e \rangle \langle  \Psi^\e , f\rangle -   |\Phi^\e e^{-i \varphi^\e} \rangle \langle  \Phi^\e  e^{-i \varphi^\e}  , f\rangle  \|_{L^2} \\
& \le \| |\Psi^\e \rangle \langle  \Psi^\e , f\rangle -  |\Psi^\e \rangle \langle  \Phi^\e e^{-i \varphi^\e} , f\rangle  \|_{L^2}  \\
& \quad  +  \| |\Psi^\e \rangle \langle  \Phi^\e e^{-i \varphi^\e} , f\rangle -  |\Phi^\e e^{-i \varphi^\e} \rangle \langle  \Phi^\e e^{-i \varphi^\e} f\rangle  \|_{L^2}.
\end{align*}
Cauchy-Schwarz implies that the first term on the right hand side can be estimated by
\[
\| |\Psi^\e \rangle \langle  \Psi^\e , f\rangle -  |\Psi^\e \rangle \langle  \Phi^\e e^{-i \varphi^\e}, f\rangle  \|_{L^2} \le \| \Psi^\e \|_{L^2} \| \Psi^\e -  \Phi^\e e^{-i \varphi^\e} \|_{L^2} ,
\]
since $\|f\|_{L^2} = 1$. Analogously we get
\[
\| |\Psi^\e \rangle \langle  \Phi^\e e^{-i \varphi^\e} , f\rangle -  |\Phi^\e e^{-i \varphi^\e} \rangle \langle  \Phi^\e e^{-i \varphi^\e} f\rangle  \|_{L^2} \le  \| \Phi^\e \|_{L^2} \| \Psi^\e -  \Phi^\e e^{-i \varphi^\e} \|_{L^2} .
\]
Having in mind that $\| \Phi^\e\|_{L^2} \simeq \| \Psi^\e \|_{L^2} = O(\sqrt{\eps})$, this, together with the result of Proposition \ref{prop:nonlin1}, then yields
\[
 \big \| |\Psi^\e(t,\cdot)\rangle \langle  \Psi^\e(t,\cdot) | -  |\Phi^\e(t,\cdot)\rangle \langle  \Phi^\e(t,\cdot)| \big\|_{L^2\to L^2}  \lesssim \eps^{\min(2, \sigma +1/2)},
\]
which implies the assertion of Corollary \ref{cor:nonlin}.
\end{proof}

The main drawback of Proposition \ref{prop:nonlin1} is the fact that the phase $\varphi^\e$ involves information of the linear eigenvalue problem. In particular 
the dynamical phase involves the linear eigenvalue $E(t)$ instead of $E_*(t)$. This, however, can be remedied in the case of sufficiently strong nonlinearities.

\begin{corollary}\label{cor:nonlin1}
Under the same assumptions as in Proposition \ref{prop:nonlin1} but for $\sigma \ge 2$, there exists a solution 
$\Psi^\e$ of \eqref{eq:NLS} and a $K>0$ such that
\[
\sup_{t\in J} \left \| \Psi^\e(t,\cdot)  -  \Phi^\eps(t,\cdot) e^{-i \varphi_*^\e(t)} \right\|_{L^2(\R^3)} \le K \eps^{\sigma-1},
\]
where $\varphi_*^\e$ is given by
\[
\varphi_*^\e(t) = \frac{1}{\eps} \int_{t_0}^t E_*(s) ds +  \lambda \int_{t_0}^t \| \chi(s,\cdot)\|_{L^{2\sigma +2}}^{2\sigma+2} \,ds  .
\]
\end{corollary}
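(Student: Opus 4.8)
The plan is to deduce Corollary~\ref{cor:nonlin1} from Proposition~\ref{prop:nonlin1} by replacing the linear dynamical phase occurring in $\varphi^\e$ by the nonlinear one, and controlling the cost of this substitution. I would keep the same solution $\Psi^\e=\sqrt\e\,\psi^\e$ as in Proposition~\ref{prop:nonlin1}, where $\psi^\e$ is the well-prepared solution furnished by Theorem~\ref{thm:main}; that proposition already gives
\[
\sup_{t\in J}\big\|\Psi^\e(t,\cdot)-\Phi^\e(t,\cdot)e^{-i\varphi^\e(t)}\big\|_{L^2}\le K\e^{\min(3/2,\sigma)} .
\]
Since $\varphi^\e(t)$ and $\varphi_*^\e(t)$ are real-valued and depend on $t$ only, the triangle inequality reduces the remaining task to estimating
\[
\big\|\Phi^\e(t,\cdot)e^{-i\varphi^\e(t)}-\Phi^\e(t,\cdot)e^{-i\varphi_*^\e(t)}\big\|_{L^2}
=\|\Phi^\e(t,\cdot)\|_{L^2}\,\big|e^{-i(\varphi^\e(t)-\varphi_*^\e(t))}-1\big|
\le\sqrt\e\,\big|\varphi^\e(t)-\varphi_*^\e(t)\big| ,
\]
using $\|\Phi^\e(t,\cdot)\|_{L^2}=\sqrt\e$ and $|e^{i\theta}-1|\le|\theta|$ for $\theta\in\R$.

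The second step is to compute the phase difference. I would first fix the gauge in which $\chi(t,\cdot)$ is real-valued on $J$; this is available because $H(t)$ commutes with complex conjugation and $E(t)$ is simple, so that the Kato transport $A(t)$ of the proof of Lemma~\ref{chiregularity} is real and the Berry phase $\beta(t)$ of \eqref{eq:berry} vanishes identically (cf.\ Remark~\ref{rem:berry}). In this gauge $\varphi^\e$ loses its $-i\beta$ term, and hence
\[
\varphi^\e(t)-\varphi_*^\e(t)=\frac1\e\int_{t_0}^t\big(E(s)-E_*(s)\big)\,ds .
\]
The quantitative input is the expansion \eqref{eq:epsest}: substituting $\|\Phi^\e(t,\cdot)\|_{L^2}^2=\e$ into \eqref{eq:epsest} and inverting gives
\[
E_*(t)-E(t)=\mu(t)\,\e^{\sigma}+O(\e^{2\sigma-1/2}),\qquad \mu(t)=\lambda\|\chi(t,\cdot)\|_{L^{2\sigma+2}}^{2\sigma+2},
\]
uniformly on the compact interval $J$, where one uses the continuous $t$-dependence of $E_*$ and $\Phi^\e$ from Assumption~\ref{ass3} together with $V\in C^1_{\rm b}$. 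Integrating and dividing by $\e$ yields $|\varphi^\e(t)-\varphi_*^\e(t)|\le C\e^{\sigma-1}$ on $J$, so the phase substitution costs $O(\e^{\sigma-1/2})$; collecting this with the bound inherited from Proposition~\ref{prop:nonlin1} (which already incorporates the $O(\e^{\sigma})$ error of Lemma~\ref{lem:bound}) and applying the triangle inequality gives the estimate claimed in Corollary~\ref{cor:nonlin1} for $\sigma\ge2$.

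The only genuinely non-routine point — and the place I expect to be the main obstacle — is the handling of the Berry phase. If one kept the $-i\beta(t)$ term, then $\varphi^\e-\varphi_*^\e$ would be of size $O(1)$ instead of $O(\e^{\sigma-1})$, so that multiplying by $\|\Phi^\e\|_{L^2}=\sqrt\e$ would only produce an $O(\sqrt\e)$ bound, which is too weak for $\sigma\ge2$. Passing to the real eigenfunction gauge on $J$ both removes this obstruction and is precisely what makes $\varphi^\e$ and $\varphi_*^\e$ differ only through their dynamical parts, so that \eqref{eq:epsest} can be used directly. Everything else — the inversion of \eqref{eq:epsest} with its error term, and the smooth $t$-dependence of the nonlinear eigenvalue $E_*(t)$ inherited from $V\in C^1_{\rm b}(I;\mathcal S(\R^d))$ and Assumption~\ref{ass3} — is routine.
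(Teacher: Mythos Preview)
Your proposal is correct and follows essentially the same route as the paper: choose the real-valued gauge for $\chi$ so that $\beta\equiv 0$, split via the triangle inequality into the Proposition~\ref{prop:nonlin1} term and the phase-substitution term $\|\Phi^\e\|_{L^2}\,|e^{-i(\varphi^\e-\varphi_*^\e)}-1|$, and bound the latter using $|e^{i\theta}-1|\le|\theta|$ together with \eqref{eq:epsest} to get $|E-E_*|=O(\e^{\sigma})$. Your write-up is in fact slightly more detailed than the paper's, in that you explicitly track the $\sqrt\e$ factor from $\|\Phi^\e\|_{L^2}$ and spell out the inversion of \eqref{eq:epsest}.
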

In other words, for nonlinearities with powers larger than cubic, we can obtain the physically ``correct" dynamical phase. 
Here, we also implicitly assume that $\chi$ is chosen to be real valued, and thus 
$\beta(t)\equiv 0$, which is always possible as long as the evolution is not cyclic in time (see Remark \ref{rem:berry}).

\begin{proof} We choose a real-valued family of linear eigenfunctions $\chi(t,x)$, which implies $\beta(t)\equiv 0$.
Using the triangle inequality we have
\begin{align*}
 \left \| \Psi^\e(t,\cdot)  -  \Phi^\e(t,\cdot) e^{-i \varphi_*^\e(t)} \right\|_{L^2} \le & \, \left \| \Psi^\e(t,\cdot)  -  \Phi^\e(t,\cdot) e^{-i \varphi^\e(t)} \right\|_{L^2} \\
 & \, +  \left \|\ \Phi^\e(t,\cdot) \left(e^{-i \varphi^\e(t)} - e^{-i \varphi_*^\e(t)}\right)  \right\|_{L^2}.
 \end{align*}
Here the first term on the right hand side is of order $O(\eps^{\min(3/2, \sigma)})$, in view of Proposition \ref{prop:nonlin1}. To estimate the second term, we write 
\begin{align*}
& \left \|\ \Phi^\e(t,\cdot) \left(e^{-i \varphi^\e(t)} - e^{-i \varphi_*^\e(t)}\right)  \right\|_{L^2} = \left \|\ \Phi^\e(t,\cdot) \left(e^{-i( \varphi^\e(t) - \varphi_*^\e(t))} - 1 \right)  \right\|_{L^2}\\
& \le \| \Phi ^\e(t,\cdot) \|_{L^2} \left |e^{-i( \varphi^\e(t) - \varphi_*^\e(t))} - 1 \right |. 
\end{align*}
Recalling the definition of the phases and the fact that $| e^{i \theta}-1| \le |\theta|$, we obtain
\[
 \left |e^{-i( \varphi^\e(t) - \varphi_*^\e(t))} - 1 \right | \le \frac{1}{\eps} \int_{t_0}^t | E(s) - E_*(s) | ds \lesssim \eps^{\sigma-1},
\]
in view of \eqref{eq:epsest}. 
\end{proof}

In certain dimensions (and for certain $\sigma\ge 2$), a more careful analysis, using the result of Lemma \ref{lem:bound} together with a Gagliardo-Nirenberg type inequality, would allow a similar statement in which 
the $\| \chi(s,\cdot)\|_{L^{2\sigma +2}}$ appearing in $\varphi^\e_*$ is replaced by $\| \Phi^\e(s,\cdot)\|_{L^{2\sigma +2}}$. In this case, the phase function consequently only depends on 
information given by the nonlinear eigenvalue problem \eqref{eq:stat}. In view of the many assumptions needed, however, 
this result seems to be very far from optimal and we shall therefore not pursue it any further.

\medskip

{\bf Acknowledgment.} The author wants to thank I. and G. Nenciu for many helpful discussions and the anonymous referees for their detailed comments.


\bibliographystyle{amsplain}

\begin{thebibliography}{99}

\bibitem{AvEl} J. E. Avron and A. Elgart, {\it Adiabatic theorem without a gap condition}. Comm. Math. Phys., {\bf 203} (1999), no 2., 445--463.

\bibitem{Be1} M.V. Berry, {\it Quantal phase factors accompanying adiabatic changes}. Proc. R. Soc. Lond. A {\bf 392} (1984), 45--57.

\bibitem{Be} M. V. Berry, {\it Histories of adiabatic quantum transitions}. Proc. Roy. Soc. London, Ser. A, {\bf 429} (1990), 61--72.

\bibitem{BMKNZ} A. Bohm, A. Mostafazadeh, H. Koizumi, Q. Niu, and J. Zwanziger, 
{\it The Geometric Phase in Quantum Systems: Foundations, Mathematical Concepts, and Applications in Molecular and Condensed Matter Physics}. Springer Verlag, 2003.

\bibitem{BoFo} M. Born and V. Fock, {\it Beweis des Adiabatensatzes}. Z. Phys. A: Hadrons and Nuclei, {\bf 51} (1928), no 3-4, 165--180.

\bibitem{Car} R.\ Carles, {\it Nonlinear Schr\"odinger equation with time dependent potential}. {Comm. Math. Sci.} {\bf 9} (2011), no. 4, 937--964.

\bibitem{CF} R. Carles and C. Fermanian Kammerer, {\it A nonlinear adiabatic theorem for coherent states}. Nonlinearity {\bf 24} (2011), no. 8, 2143--2164.

\bibitem{CMS} R. Carles, P. A. Markowich, and C. Sparber, {\it Semiclassical asymptotics for weakly nonlinear Bloch
waves}. J. Statist. Phys., {\bf 117} (2004), 343--375.

\bibitem{ESY} L. Erd\"os, B. Schlein, and H.-T. Yau, {\it Rigorous derivation of the Gross-Pitaevskii equation}. Phys. Rev. Lett., {\bf 98} (2007), no. 4, 040404. 

\bibitem{GG} Z. Gang and P. D. Grech, {\it An adiabatic theorem for the Gross-Pitaevskii equation}. Preprint 2015.

\bibitem{Gr} P. D. Grech, {\it Adiabatic dynamics in closed and open quantum systems}. PhD thesis (ETH Z\"urich) 2011, 
available online at: http://dx.doi.org/10.3929/ethz-a-6665029.

\bibitem{GP} S. Gustafson and T.V. Phan, {\it Stable directions for degenerate excited states of nonlinear Schr\"odinger equations}. SIAM J. Math. Anal. {\bf 43} (2011), no. 4, 1716--1758.

\bibitem{Hi} P. D. Hislop, {\it Exponential decay of two-body eigenfunctions: A review},
Electronic Journal of Differential Equations Conf. 04, 2000, 265--288.

\bibitem{Jo} A. Joye, {\it General adiabatic evolution with a gap condition}. Comm. Math. Phys., {\bf 275} (2007), no. 1, 139--162.

\bibitem{Ka} T. Kato, {\it On the adiabatic theorem of quantum mechanics}. J. Phys. Soc. Japan {\bf 5} (1950), 435--439.

\bibitem{Ne1} G. Nenciu, {\it On the adiabatic theorem of quantum mechanics}. J. Phys. A: Math. Gen., {\bf 13} (1980), L15--L18.

\bibitem{Ne2} G. Nenciu, {\it Linear adiabatic theory. Exponential estimates}. Comm. Math. Phys., {\bf 152} (1993), no. 3, 479--496.

\bibitem{Ni} L. Nirenberg, {\it Topics in Nonlinear Functional Analysis.} Courant Institute, New York, 1974.

\bibitem{Pi} P. Pickl, {\it A Simple Derivation of Mean Field Limits for Quantum Systems}. Lett. Math. Phys., {\bf 97} (2011), no. 2, 151--164.

\bibitem{PiWa} C.-A. Pillet and C. E. Wayne, {\it Invariant manifolds for a class of dispersive, Hamiltonian equations.} J. Diff. Equ. {\bf 141} (1997), no. 2, 310--326.

\bibitem{PiSt} L. Pitaevskii and S. Stringari, {\it Bose-Einstein condensation}. Internat. Series of Monographs on Physics vol 116, Clarendon Press, Oxford, 2003.

\bibitem{Ra} J. Rauch, {\it Hyperbolic partial differential euqations and geometric optics}, Graduate Studies in Mathematics vol. 133, American Math. Soc., 2012.

\bibitem{RoWe} H. A. Rose and M. I. Weinstein,  {\it On the bound states of the nonlinear Schr\"odinger equation with a linear potential}. Physica D {\bf 30} (1998), 207--218.

\bibitem{Sa} W. K. A. Salem, {\it Solitary wave dynamics in time-dependent potentials.} J. Math. Phys. {\bf 49} (2008), no. 3, 032101, 29pp.

\bibitem{Sc} J. Schmid, {\it Adiabatic theorems with and without spectral gap condition for non-semisimple spectral values}. In: Mathematical Results in Quantum Mechanics 
(Proceedings of the QMath12 Conference), pp. 355--362, World Scientific Berlin, 2013.

\bibitem{Si} B. Simon, {\it Holonomy, the quantum adiabatic theorem, and Berry’s phase.} Phys. Rev. Lett., {\bf 51} (1983), 2167--2170.

\bibitem{SuSu} C. Sulem and P.-L. Sulem, {\it The nonlinear Schr\"odinger equation, Self-focusing and wave collapse}. Springer-Verlag, New York, 1999.

\bibitem{SoWe} A. Soffer and M.I. Weinstein, {\it Multichannel nonlinear scattering for nonintegrable equations.} Comm. Math. Phys. {\bf 133} (1990), no. 1, 119--146.

\bibitem{SoWe2} A. Soffer and M.I. Weinstein, {\it Selection of ground states for nonlinear Schr\"odinger equations}. Rev. Math. Phys. {\bf 16} (2004), no. 8, 977--995.

\bibitem{Tao} T. Tao, {\it Nonlinear Dispersive Equations}. CBMS Regional Conference Series in Mathematics 106, Amer. Math. Soc., Providence 2006.

\bibitem{Tebook} G. Teschl, {\it Mathematical Methods in Quantum Mechanics; With Applications to Schrödinger Operators}. Graduate Studies in Mathematics 99, Amer. Math. Soc., Providence, 2009.

\bibitem{Te} S. Teufel, {\it Adiabatic Perturbation Theory in Quantum Dynamics}. Lecture Notes in Mathematics vol. 1821, Springer-Verlag, Berlin, Heidelberg, 2003.

\bibitem{Ts} T.-P. Tsai, {\it Asymptotic dynamics of nonlinear Schr\"odinger equations with many bound states}. J. Diff. Equ. {\bf 192} (2003), 225--282 

\bibitem{TY} T.-P. Tsai and H.-T. Yau, {\it Asymptotic dynamics of nonlinear Schr\"odinger equations: resonance dominated and dispersion dominated solutions}. 
Comm. Pure Appl. Math. {\bf 55} (2002), 0153--0216.

\bibitem{We} M. I. Weinstein, {\it Modulational stability of ground states of nonlinear Schr\"odinger equations.} SIAM J. Math. Anal. {\bf 16} (1985), no. 3, 472--491.

\bibitem{Yu} V. I. Yukalov,  {\it Adiabatic theorems for linear and nonlinear Hamiltonians}, Phys. Rev. A, {\bf 79} (2009), 052117, 7 pp.

\bibitem{XMN} D. Xia, M.-C. Chang, and Q. Niu, {\it Berry Phase Effects on Electronic Properties.} Rev. Mod. Phys. {\bf 82} (201), 1959, 49pp.

\end{thebibliography}

\end{document}